\algnewcommand\algorithmicswitch{\textbf{Select}}
\algnewcommand\algorithmiccase{\textbf{If model} ==}
\algnewcommand\algorithmicassert{\texttt{assert}}
\algnewcommand\Assert[1]{\State \algorithmicassert(#1)}%
\renewcommand{\RED}[1]{#1}
\title{Phase retrieval via media diversity}
\author{
    Yan Cheng\thanks{Department of Applied Physics and Applied Mathematics, Columbia University, New York, NY 10027; yc3855@columbia.edu}
    \and
    Kui Ren \thanks{Department of Applied Physics and Applied Mathematics, Columbia University, New York, NY 10027; kr2002@columbia.edu}
    \and
    Nathan Soedjak\thanks{Department of Applied Physics and Applied Mathematics, Columbia University, New York, NY 10027; ns3572@columbia.edu}
}
\begin{document}


\maketitle

\begin{abstract}

\RED{This work studies phase retrieval in the Fresnel regime, aiming to recover the phase of an incoming wave from multi-plane intensity measurements behind different types of linear and nonlinear media. We show that unique phase retrieval can be achieved by utilizing intensity data produced by multiple media. This uniqueness does not require prescribed boundary conditions for the phase in the incidence plane, in contrast to existing phase retrieval methods based on the transport of intensity equation. Moreover, the uniqueness proofs lead to explicit phase reconstruction algorithms. Numerical simulations are presented to validate the theory.}

\end{abstract}

\begin{keywords}
phase retrieval, media diversity, \RED{Fresnel regime}, wave propagation, nonlinear media, nonlinear Schr{\"o}dinger equations, inverse problems, transport of intensity equation
\end{keywords}



\section{Introduction}

Phase retrieval is an important task in many areas of science and technology~\cite{Fienup-AO82,OpLi-IEEE81,ShElCoChMiSe-IEEE15}. There has been tremendous development in phase retrieval methods in the past decades; see, for instance, ~\cite{BaWo-JOSA03,ChBiZh-Elight21,DoVaMaPhGiUn-IEEE23,MaFaKo-AO17,PaGoKh-AO16,ScGuPaPa-PRA11,WaSoWaReZhDoDiBaZhZhLa-LSA24,WiSuArRoTr-OE24,YoOnSaKoNo-OE21,ZhTiShLuLi-OE19,ZhZhMaFaZh-MST20,ZuLiSuFaZhLuZhWaHuCh-OLE20} and references therein for some random samples of recent progress. However, researchers are still constantly pursuing new improvements over the state-of-the-art of phase retrieval methods.

Besides the advances in numerical reconstruction algorithms~\cite{CaLiSo-IEEE15,FaSt-AN20,KlSaTi-IP95,ZhZhMaFaZh-MST20,ZuLiSuFaZhLuZhWaHuCh-OLE20}, a significant amount of efforts in the area have been (necessarily) spent on designing measurement strategies to enrich the diversity of the phase information encoded in the intensity data from which phase data are extracted. Examples of such strategies include, but are not limited to, for instance, measuring intensity data at multiple transverse planes (i.e. multi-plane measurement)~\cite{GuFi-OE08,GuJiSoWaShZhZh-BOE21}, exploiting phase-amplitude interactions due to nonlinear propagation~\cite{LuBaFl-FO11,LuBaWiKuFl-AO13,PaNu-OL02}, enhancing partial coherency via structured multi-frequency, multi-probe, or multi-illumination measurements~\cite{HeLiZh-OL18,KaPaKoSiEi-IEEE20,KoPaKnEi-IEEE21,PaKnEi-IEEE17,PoYaBo-JFAA14}, and many more~\cite{ArLaLa-IEEE18,DeBo-JOSA03,EgMi-OL19,RoFa-APL04}. 

To be concrete, let us consider the propagation of an incident wave field $f(\bx)$ through a medium with refractive index $n$. In the parabolic regime, this propagation is modeled by the Schr\"{o}dinger equation:
\begin{equation}\label{EQ:LS}
    \begin{array}{rcll}
	i\dfrac{\partial u}{\partial z} +\kappa(z,\bx)\Delta u &=& 0, & \mbox{in}\ \ (0, L_z]\times \Omega\\[2ex]
	u(0, \bx)&=&f(\bx),& \mbox{in}\ \ \Omega
	\end{array}
\end{equation}
where the propagation is along the positive $z$ direction, $\Delta$ is the standard Laplace operator on the variable $\bx\in\Omega\subset\bbR^{d}$ in the transverse plane, and $\kappa:=\dfrac{\lambda}{4 \pi  n(z,\bx)}$ is the inverse refractive index of the medium with $\lambda$ the free-space wavelength of the incoming signal. 

When $\Omega$ is a bounded region, we need to supplement equation~\eqref{EQ:LS} with appropriate boundary conditions, depending on the experimental setup. In the numerical simulations we present in this paper (in~\Cref{SEC:Numer} and~\Cref{SEC:Generalizations}), we impose periodic boundary conditions. For the main part of this work, the specific boundary conditions we impose do not play a critical role.

When the refractive index $n$ is constant and $\Omega=\bbR^{d}$, the intensity of the output signal at any $0<z\le L_x$ is given as~\cite{Cazenave-Book03}
\begin{equation}
    \left |u(z,\bx)\right|=\Big(\frac{n}{\lambda z}\Big)^{\frac{d}{2}} \Big|\int_{\bbR^d}e^{i\frac{n\pi}{\lambda z}|\bx-\by|^2} f(\by) d\by\Big|\,.
\end{equation}
The classical phase retrieval problem, with transverse plane diversity, in this setting is to reconstruct $f$ from intensity data $|u(z,\bx)|$ measured at different transverse planes located in $(0, L_z)$~\cite{GuFi-OE08,GuJiSoWaShZhZh-BOE21}.

The aim of this work is to systematically study the use of media diversity to enhance phase retrieval~\cite{BaWaFl-NP09,LuBaFl-FO11}. That is, instead of measuring the intensity of the incoming wave in one medium, we measure the intensity of the field after it, after signal splitting, passes through a group of different media, parameterized by the refractive index $n$ in the linear Schr\"{o}dinger model~\eqref{EQ:LS}. Moreover, we are interested in utilizing media nonlinearity to further enrich the information content of intensity data. This is motivated by extensive studies in recent years, which show that the nonlinearity of media can create a useful interaction between the phase and the amplitude of the underlying wave and thus improve classical linear phase retrieval methods~\cite{BrGuFi-OE09,LuBaWiKuFl-AO13, LuLuFl-OE16,PaNu-OL02,TaPaRoMo-PRE03}. 
As an example, we consider the propagation of an incident wave field $f(\bx)$ through a medium that exhibits quadratic nonlinearity. The propagation is described by the following quadratic nonlinear Schr\"{o}dinger equation for the wave field $u$~\cite{BeTa-JFA06,IwOg-TAMS15,KePoVe-TAMS96,LaLuZh-arXiv23}:
\begin{equation}\label{EQ:QNLS}
    \begin{array}{rcll}
	i\dfrac{\partial u}{\partial z} +\kappa(z,\bx) \Delta u +\beta(z, \bx) u^2 &=& 0, & \mbox{in}\ \ (0, L_z]\times \Omega\\[2ex]
	u(0, \bx)&=&f(\bx),& \mbox{in}\ \ \Omega
	\end{array}
\end{equation}
which is sometimes used as a simplified model for media with the second harmonic generation. The coefficient $\beta$ is related to the second harmonic generation of the medium.

The general phase retrieval question we want to ask is: assuming that we can measure the intensity of the wave field $u$ after it passes through the medium, at a sequence of $N_p$ transverse planes located at $\{z_p\}_{p=1}^{N_p}$, for a group of $N_m$ different media $\{\kappa_m, \beta_m\}_{m=1}^{N_m}$, can we recover the incident field $f(\bx)$? Mathematically, this question can be formulated on the continuous level as an inverse problem as follows.

{\bf Phase retrieval problem:} Determine $f(\bx)$ from the data encoded in the map
\begin{equation}\label{EQ:Data Map}
    \Lambda_f^{[Z_1, Z_2]}: 
    \begin{array}{rcl}
    (\kappa, \beta) &\mapsto& |u(z, \bx)|_{[Z_1, Z_2]\times \Omega}\\
    \cP\times \cP & \to & \cD
    \end{array}
\end{equation}
where $0\le Z_1<Z_2\le L_z$, while $\cP$ and $\cD$ denote respectively the parameter (i.e., medium) and data spaces of the problem.

Various recent studies have shown that such phase reconstructions are sometimes possible; see, for instance, ~\cite{BrGuFi-OE09,LuBaWiKuFl-AO13, LuLuFl-OE16,PaNu-OL02,TaPaRoMo-PRE03} and references therein. In particular, for nonlinearities modelled by the Gross-Pitaevskii (GP) equation~\cite{Berge-PR98,Fibich-Book15}: 
\begin{equation}\label{EQ:GP}
    \begin{array}{rcll}
	i\dfrac{\partial u}{\partial z} + \kappa(z, \bx) \Delta u + \beta(z, \bx)|u|^2 u &=& 0, &  \mbox{in}\ \ (0, L_z]\times \Omega\\[2ex]
	u(0, \bx)&=&f(\bx),& \mbox{in}\ \ \Omega
    \end{array}
\end{equation}
computational and experimental results~\cite{BaWaFl-NP09,GoPs-PRA11,LuBaWiKuFl-AO13, LuLuFl-OE16,LuBaFl-FO11} demonstrated that nonlinearity increased the robustness of phase retrieval under different scenarios.

It is well-known that, in the case of linear media, $u e^{i\varphi}$ with any constant phase $\varphi$ is also a solution to the linear Schr\"odinger equation~\eqref{EQ:LS}. In other words, the equation is invariant under the transform $u\mapsto ue^{i\varphi}$. Therefore, if we only measure the intensity of the wave field, we can, at most, have the unique reconstruction of the phase of $f$ up to a constant shift. This is true even in the case of the nonlinear model~\eqref{EQ:GP} as the same invariance holds for the solution to the Gross-Pitaevskii equation. Therefore, in general, we can only hope to have phase retrieval up to an arbitrary constant.

\RED{In~\Cref{SEC:Nonuniqueness}, we present some simple examples to show that one can not expect uniqueness in phase retrieval with intensity data from a single medium, even if the medium is nonlinear and one can measure intensity data at infinitely many transverse planes. We will see, however, that data from multiple media can potentially allow us to restore this uniqueness.}

The rest of this work is organized as follows. In~\Cref{SEC:LinearMedia} we show that using data from multiple linear media, one is able to uniquely recover the phase of the incident wave up to a constant shift. We then show in~\Cref{SEC:NonlinearMedia} that this constant shift can be eliminated with data from multiple nonlinear media modeled by~\eqref{EQ:QNLS}. The theoretical derivations result in explicit phase reconstruction methods that we summarize in~\Cref{SEC:Algo}. Numerical experiments are supplied in~\Cref{SEC:Numer} to support the theoretical understanding. In~\Cref{SEC:Generalizations}, we investigate further the phase retrieval problem in settings where a mathematical understanding is missing, but the retrieval can still be done in a robust way. Concluding remarks are offered in~\Cref{SEC:Concl}.

\section{Recovering phase gradient with linear media diversity}
\label{SEC:LinearMedia}
 
We start by showing that appropriate intensity data measured from multiple linear media, that is, the case $\beta=0$, is sufficient to reconstruct the phase of the incident wave up to a constant shift.

Throughout the rest of the paper, we need to take various forms of derivatives of the intensity of the wave field $u$. Here, we make some general assumptions on the parameters of the problem so that one can reasonably believe that such differentiations are well-defined. We refer to~\cite{Cazenave-Book03,BeTa-JFA06,IwOg-TAMS15,KePoVe-TAMS96} and references therein for more detailed mathematical discussions on the regularity of solutions to the equations discussed here.\\[2ex]
\noindent{\bf Technical Assumptions:} We assume that
\begin{itemize}
    \item Either $\Omega=\bbR^d$ or $\Omega$ is a bounded set with smooth boundary $\partial\Omega$.
    \item The coefficients $\kappa$ and $\beta$ are sufficiently smooth, for instance, $\kappa, \beta \in\cP\subset \cC^2(\overline\Omega; \bbR)$ (the standard space of real-valued twice continuously differentiable functions on $\overline\Omega$), and $\kappa(\bx)\neq 0$ for all $\bx\in\Omega$.
    \item The incident wave field belongs to a sufficiently smooth class, for instance, $f\in\cF\subset \cC^2(\overline\Omega; \bbC)$, and $|f|(\bx)\neq 0$ for all $\bx\in \Omega$.
\end{itemize}
Such assumptions, together with appropriate boundary conditions when $\Omega$ is a bounded domain (for instance, homogeneous Dirichlet boundary conditions), are often sufficient to ensure the smoothness of solutions to~\eqref{EQ:LS},~\eqref{EQ:QNLS}, and~\eqref{EQ:GP} in $z$ and $\bx$ that we need in the calculations in this section and the next. Interested readers are referred to ~\cite{Cazenave-Book03} and references therein for rigorous mathematical treatments of the subject.

We have the following result on the uniqueness of reconstructing the gradient of the phase from intensity data measured in multiple media for model~\eqref{EQ:LS}.
\begin{theorem}\label{THM:k Unique} 
Let $f_1, f_2\in \cF$ be such that $\Lambda^{[0, \eps]}_{f_1}(\kappa,0) = \Lambda^{[0, \eps]}_{f_2}(\kappa,0)$ for all $\kappa\in \cP$ for the linear Schr\"odinger model~\eqref{EQ:LS}. Assume that there exists at least $N_m\ge d+2$ different $\kappa_1,\dots, \kappa_{N_m}\in \cP$ such that the matrix
    \begin{equation}\label{EQ:K}
    \cK(\bx):=
    \begin{pmatrix}
        \kappa_1 & \partial_{x_1} \kappa_1 & \cdots & \partial_{x_d} \kappa_1 & \Delta \kappa_1\\
        \vdots & \vdots & \ddots & \vdots &\vdots\\
        \kappa_m & \partial_{x_1} \kappa_m & \cdots & \partial_{x_d} \kappa_m & \Delta \kappa_m\\           
        \vdots & \vdots & \ddots & \vdots &\vdots\\
        \kappa_{N_m} & \partial_{x_1} \kappa_{N_m} & \cdots & \partial_{x_d} \kappa_{N_m} & \Delta \kappa_{N_m}
        \end{pmatrix}
    \end{equation}
    is full-rank for each $\bx\in\Omega$, and for each $\kappa_m$, $\left.\pdr{|u_1|}{z}\right|_{z=0} \neq 0$ and $\left.\pdr{|u_2|}{z}\right|_{z=0}\neq 0$ in $\Omega$. Here, $u_j$ ($j=1,2$) denotes the solution to \eqref{EQ:LS} with incident wave $f_j$ (which depends on $\kappa$). Then $f_2(\bx) = f_1(\bx) e^{i\varphi}$ for some constant $\varphi$.
\end{theorem}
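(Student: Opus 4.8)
The plan is to pass to the polar form of the wave field and read off information from the small-$z$ Taylor expansion of the measured amplitude, using the medium index as a ``probe'' that isolates successive coefficients. Concretely, I would write $u=Ae^{i\phi}$ with $A=|u|>0$ (legitimate for small $z$ since $|f|\neq 0$ and $u$ depends continuously on $z$), substitute into~\eqref{EQ:LS}, and separate real and imaginary parts. Treating $\kappa=\kappa(\bx)$ as $z$-independent (per the standing assumption $\kappa\in\cC^2(\overline\Omega)$), this yields the transport-of-intensity relation $A_z=-\kappa\,(2\nabla A\cdot\nabla\phi+A\Delta\phi)$ together with the eikonal-type relation $\phi_z=\kappa\,(\Delta A/A-|\nabla\phi|^2)$. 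Because the data force $|u_1|=|u_2|$ on $[0,\eps]\times\Omega$ for every $\kappa$, all $z$-derivatives of $A_1$ and $A_2$ coincide at $z=0$; evaluation at $z=0$ already gives $|f_1|=|f_2|=:a$. Writing $\phi_{0,j}$ for the initial phase of $f_j$ and $\psi:=\phi_{0,1}-\phi_{0,2}$, the theorem reduces to showing $\nabla\psi\equiv 0$.

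Next I extract information order by order in $z$. Matching $A_z|_{z=0}=-\kappa P_j$, where $P_j:=2\nabla a\cdot\nabla\phi_{0,j}+a\Delta\phi_{0,j}$, and cancelling the nonzero factor $\kappa$ gives $P_1=P_2=:P$. Observe that the hypothesis $\partial_z|u_j|\neq 0$ in $\Omega$ is, after cancelling $\kappa\neq 0$, exactly the statement $P\neq 0$ throughout $\Omega$; this is the ingredient that will let me divide at the end.

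The crux is the second $z$-derivative. Differentiating the transport relation once more in $z$, substituting the first-order expressions for $A_z|_{z=0}$ and $\phi_z|_{z=0}$, and collecting terms, I expect to find, for a single field,
\[
A_{zz}|_{z=0}=\kappa^2 S_0+\kappa\,\nabla\kappa\cdot\vec{S}-a\,\kappa\,Q\,\Delta\kappa,
\]
where $Q_j:=\Delta a/a-|\nabla\phi_{0,j}|^2$, the vector $\vec{S}_j:=2P_j\nabla\phi_{0,j}-2Q_j\nabla a-2a\nabla Q_j$, and $S_{0,j}$ all depend only on $a$ and $\phi_{0,j}$, not on $\kappa$. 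The only derivatives of $\kappa$ that can possibly survive are $\nabla\kappa$ (from $\nabla(\kappa P)$ and $\nabla(\kappa Q)$) and $\Delta\kappa$ (entering solely through $\Delta\phi_z|_{z=0}=\Delta(\kappa Q)$), which is precisely the column structure of $\cK$. Confirming that no off-diagonal Hessian entry of $\kappa$ leaks in is the main bookkeeping obstacle, and I would dispatch it with the identity $\Delta(fg)=(\Delta f)g+2\nabla f\cdot\nabla g+f\Delta g$.

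Finally, equating $A_{zz}|_{z=0}$ for the two fields and dividing by $\kappa\neq 0$ produces a single linear relation $\kappa_m\,c_0+\nabla\kappa_m\cdot\vec{c}+\Delta\kappa_m\,c_\Delta=0$ at each $\bx$, valid for every medium $m$, with $\kappa$-independent coefficients $c_0=S_{0,1}-S_{0,2}$, $\vec{c}=\vec{S}_1-\vec{S}_2$, and $c_\Delta=-a(Q_1-Q_2)$. This says the vector $(c_0,\vec{c},c_\Delta)\in\bbR^{d+2}$ lies in the kernel of $\cK(\bx)$, so the full-rank hypothesis forces it to vanish. From $c_\Delta=0$ I obtain $Q_1=Q_2$, i.e. $|\nabla\phi_{0,1}|^2=|\nabla\phi_{0,2}|^2$; substituting $Q_1=Q_2$ and $P_1=P_2=P$ into $\vec{c}=0$ collapses that vector equation to $2P\nabla\psi=0$ (the $\kappa^2$ coefficient $c_0$ is never even needed). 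Since $P\neq 0$ everywhere, $\nabla\psi\equiv 0$, so $\psi$ is a constant; hence $f_2=f_1e^{i\varphi}$ with $\varphi=-\psi$, as claimed. The two points requiring genuine care are justifying the $z$-differentiations at $z=0$ (smoothness near $z=0$, available where $A>0$) and the clean separation of the $\kappa$-dependence in the second-order term.
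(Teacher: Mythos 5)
Your proposal is correct and follows the same overall strategy as the paper's proof: Taylor-expand the measured amplitude at $z=0$, organize the second $z$-derivative by its $\kappa$-dependence into coefficients of $\kappa$, $\nabla\kappa$, $\Delta\kappa$, use the full-rank hypothesis on $\cK$ to force the differences of these coefficients to vanish, extract the equality of the $\Delta\kappa$-coefficients first and then feed it into the $\nabla\kappa$-coefficients, and finally divide by the nonvanishing transport term to equate the phase gradients. The only real difference is the parameterization: you use the polar form $u=Ae^{i\phi}$, while the paper works with the direction decomposition $u=|u|(v+iw)$, $v^2+w^2=1$, and never introduces a phase function. Your choice buys cleaner bookkeeping: your $Q_j$ is a genuine scalar, so $Q_1=Q_2$ immediately yields $\nabla Q_1=\nabla Q_2$, a step the paper must do by hand (showing $\nabla(\omega_j B-\nu_j C)=\omega_j\nabla B-\nu_j\nabla C$ via $\nu_j\nabla\nu_j+\omega_j\nabla\omega_j=0$); indeed your coefficients agree exactly with the paper's under $v=\cos\phi$, $w=\sin\phi$, namely $F_3=-aQ$ and $F_2=2P\nabla\phi_0-2Q\nabla a-2a\nabla Q$, and like the paper you never need the $\kappa^2$ coefficient. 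What the polar form costs is global well-definedness: a smooth single-valued phase need not exist on a non--simply-connected $\Omega$, which the paper's standing assumptions permit, so ``write $u=Ae^{i\phi}$'' is not fully justified as stated (your justification only addresses $A>0$, not single-valuedness of $\phi$). The gap is cosmetic rather than structural: interpret $\nabla\phi$ as the globally defined vector field $v\nabla w-w\nabla v=\Im(\overline{u}\,\nabla u)/|u|^2$, run your argument on it, and conclude by observing that $u_1\overline{u_2}/(|u_1|\,|u_2|)$ is globally defined with vanishing gradient, hence constant on connected $\Omega$ --- which is precisely the content of the paper's \Cref{LEM:Phase shift}, the device its direction decomposition exists to support.
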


To prove this uniqueness result, we utilize the following lemma.
\begin{lemma}\label{LEM:k Evolve} Let $u$ be a solution to \eqref{EQ:LS}. Decompose $u=|u|(v+iw)$ into its amplitude $|u|$ and direction $v+iw$, where $v$ and $w$ are real-valued functions. Then the time evolution equation \eqref{EQ:LS} for $u$ implies the following evolution equations for $|u|$, $v$, and $w$:
\begin{equation}\label{EQ:Ampl-Real-Imag}
    \dfrac{\partial |u|}{\partial z} = \kappa A(|u|,v,w),\qquad 
    \dfrac{\partial v}{\partial z} = \kappa B(|u|,v,w), \qquad 
    \dfrac{\partial w}{\partial z} = \kappa C(|u|,v,w)
\end{equation}
in $\Omega\times [0,\eps)$, where
\begin{equation}\label{EQ:ABC}
\begin{aligned}
    A(|u|, v, w) &:= |u|(w\Delta v - v\Delta w) + 2\nabla |u|\cdot (w\nabla v - v\nabla w),\\
    B(|u|, v, w) &:= -w\left[\frac{\Delta |u|}{|u|} + (w\Delta w + v\Delta v)\right],\\
    C(|u|, v, w) &:= v\left[\frac{\Delta |u|}{|u|} + (w\Delta w + v\Delta v)\right].
\end{aligned}
\end{equation}
\end{lemma}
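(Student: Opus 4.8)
The plan is to translate the complex equation~\eqref{EQ:LS} into a coupled real system for the amplitude $|u|$ and the two components $v,w$ of the phase direction, and then to read off the three evolution equations by projecting onto the right directions. First I would rewrite~\eqref{EQ:LS} as $\partial_z u = i\kappa\Delta u$ and split into real and imaginary parts. Writing $p := |u|\,v$ and $q := |u|\,w$ for the real and imaginary parts of $u$, this gives the pair $\partial_z p = -\kappa\Delta q$ and $\partial_z q = \kappa\Delta p$. I would then expand both sides using the product rules $\partial_z(|u|\,v) = (\partial_z|u|)v + |u|\,\partial_z v$ and $\Delta(|u|\,v) = v\Delta|u| + 2\nabla|u|\cdot\nabla v + |u|\Delta v$ (and the analogues with $w$), producing two scalar identities that relate $\partial_z|u|$, $\partial_z v$, $\partial_z w$ to spatial derivatives of $|u|$, $v$, $w$, with $\kappa$ appearing as a common prefactor since the model is linear.

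The decisive structural fact is that $v+iw$ is a direction, hence has unit modulus, so $v^2 + w^2 = 1$ identically. Differentiating this constraint yields $v\,\partial_z v + w\,\partial_z w = 0$ and, crucially, $v\nabla v + w\nabla w = 0$; the latter is exactly what forces the gradient cross-terms to disappear in the final formulas. To extract $\partial_z|u|$, I would take the combination $v\cdot(\text{first identity}) + w\cdot(\text{second identity})$: the coefficient of $\partial_z|u|$ becomes $v^2+w^2=1$, the $\partial_z v$ and $\partial_z w$ contributions collapse via $v\,\partial_z v + w\,\partial_z w = 0$, and the two $\Delta|u|$ terms cancel against each other. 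What survives is $\partial_z|u| = \kappa\,(w\Delta p - v\Delta q)$, which upon expansion is precisely $\kappa A$.

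With $\partial_z|u| = \kappa A$ in hand, I would solve the two scalar identities individually for $\partial_z v$ and $\partial_z w$, substituting this known expression, to obtain $\partial_z v = -\tfrac{\kappa}{|u|}(\Delta q + A v)$ and $\partial_z w = \tfrac{\kappa}{|u|}(\Delta p - A w)$. The remaining task is to simplify $\Delta q + Av$ and $\Delta p - Aw$ into the advertised forms of $B$ and $C$. I expect the only real obstacle here to be the bookkeeping: after full expansion, the leftover gradient terms assemble into $2w\,\nabla|u|\cdot(v\nabla v + w\nabla w)$ and $2v\,\nabla|u|\cdot(v\nabla v + w\nabla w)$ respectively, both of which vanish by the constraint $v\nabla v + w\nabla w = 0$, while the Laplacian terms regroup using $1-v^2 = w^2$ and $1-w^2 = v^2$ into $|u|\,w\,(w\Delta w + v\Delta v)$ and $|u|\,v\,(w\Delta w + v\Delta v)$. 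Dividing by $|u|$, which is nonzero by the technical assumptions, then yields exactly $\partial_z v = \kappa B$ and $\partial_z w = \kappa C$, completing the proof.
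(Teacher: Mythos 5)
Your proof is correct and takes essentially the same approach as the paper: decompose $u$ into real and imaginary parts, split the PDE into two scalar identities, adjoin the constraint coming from $v^2+w^2=1$, and solve for the three $z$-derivatives. The only difference is presentational --- the paper phrases the final step as solving a $3\times 3$ linear system and leaves the algebra implicit, whereas you carry out the elimination explicitly and make visible the role of the spatial constraint $v\nabla v + w\nabla w = 0$ in collapsing the gradient terms to reach the stated forms of $B$ and $C$.
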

\begin{proof} The proof is by direct calculation. Plugging in $u=|u|(v+iw)$ into the PDE \eqref{EQ:LS} and taking real and imaginary parts produces the system 
\begin{align*}
    v\dfrac{\partial |u|}{\partial z} + |u|\dfrac{\partial v}{\partial z} &= -\kappa\Big[(\Delta |u|)w + |u|\Delta w + 2\nabla |u|\cdot \nabla w\Big],\\
    w\dfrac{\partial |u|}{\partial z} + |u|\dfrac{\partial w}{\partial z} &= \kappa\Big[(\Delta |u|) v + |u|\Delta v + 2\nabla |u|\cdot \nabla v\Big],\\
    v\dfrac{\partial v}{\partial z} + w\dfrac{\partial w}{\partial z} &= 0\,.
\end{align*}
The last equation comes from differentiating the equation $v^2+w^2=1$ with respect to $z$. Solving this linear system for $\dfrac{\partial |u|}{\partial z}$,  $\dfrac{\partial v}{\partial z}$, and $\dfrac{\partial w}{\partial z}$ then gives the desired equations in~\eqref{EQ:Ampl-Real-Imag}.
\end{proof}

We also need the following additional lemma.
\begin{lemma}\label{LEM:Phase shift} Let $\nu_1, \omega_1, \nu_2, \omega_2\in \cC^1(\overline\Omega; \bbR)$ be functions on $\Omega$ such that $\nu_1^2+\omega_1^2 = \nu_2^2+\omega_2^2 = 1$ and 
\begin{equation}\label{EQ:grad phi}
\nu_1\nabla \omega_1 - \omega_1\nabla \nu_1 = \nu_2\nabla \omega_2 - \omega_2\nabla \nu_2
\end{equation}
in $\Omega$. Then there exists a constant $\varphi$ such that $\nu_1(\bx) + i \omega_1(\bx) = [\nu_2(\bx) + i \omega_2(\bx)]e^{i\varphi}$ for all $\bx\in\Omega$.
\end{lemma}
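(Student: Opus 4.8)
The plan is to introduce the unit-modulus complex functions $g_j := \nu_j + i\omega_j$ for $j=1,2$ (each satisfies $|g_j|=1$ by the constraint $\nu_j^2+\omega_j^2=1$) and to study the single gauge-invariant quantity $h := g_1\overline{g_2}$. Since $|h| = |g_1|\,|g_2| = 1$, proving that $h$ is a constant of modulus one immediately yields $g_1 = e^{i\varphi}g_2$, which is exactly the claimed relation $\nu_1 + i\omega_1 = (\nu_2+i\omega_2)e^{i\varphi}$. The whole argument thus reduces to showing $\nabla h \equiv 0$ on $\Omega$ and then invoking connectedness.

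First I would record the elementary identity, valid for any $\cC^1$ unit-modulus $g=\nu+i\omega$, that $\overline{g}\nabla g = (\nu\nabla\nu + \omega\nabla\omega) + i(\nu\nabla\omega - \omega\nabla\nu)$. The real part equals $\tfrac12\nabla(\nu^2+\omega^2) = \tfrac12\nabla(1) = 0$, so $\overline{g}\nabla g = i(\nu\nabla\omega - \omega\nabla\nu)$ is purely imaginary and encodes precisely the one-form appearing in hypothesis \eqref{EQ:grad phi}.

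Next I would differentiate $h$. Using $\nabla h = (\nabla g_1)\overline{g_2} + g_1\nabla\overline{g_2}$ together with $g_j\overline{g_j}=1$ (hence $g_j\nabla\overline{g_j} = -\overline{g_j}\nabla g_j$), a short computation collapses the product to $\overline{h}\nabla h = \overline{g_1}\nabla g_1 - \overline{g_2}\nabla g_2$. Substituting the identity above turns this into $\overline{h}\nabla h = i\bigl[(\nu_1\nabla\omega_1 - \omega_1\nabla\nu_1) - (\nu_2\nabla\omega_2 - \omega_2\nabla\nu_2)\bigr]$, which vanishes identically by \eqref{EQ:grad phi}. Multiplying $\overline{h}\nabla h = 0$ by $h$ and using $h\overline{h}=1$ then gives $\nabla h \equiv 0$.

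Finally, with $\nabla h \equiv 0$ on $\Omega$, the function $h$ is locally constant; assuming $\Omega$ is connected (as in the technical assumptions, where $\Omega$ is either $\bbR^d$ or a domain with smooth boundary), $h \equiv e^{i\varphi}$ for a single constant $\varphi\in\bbR$, and the conclusion follows. I expect the only genuine subtlety to be topological rather than computational: the individual phases of $g_1$ and $g_2$ need not admit global single-valued lifts when $\Omega$ is multiply connected, so I deliberately avoid introducing them and instead kill the gradient of the gauge-invariant product $h=g_1\overline{g_2}$ directly. Connectedness of $\Omega$ is the one hypothesis I would flag as essential, since on a disconnected domain the constant $\varphi$ could a priori differ between components.
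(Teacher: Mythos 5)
Your proof is correct, and it targets the same underlying quantity as the paper's proof but gets there by a genuinely slicker route. The paper also reduces the lemma to showing that the relative phase has vanishing gradient, but it works with the quotient $\frac{\nu_1+i\omega_1}{\nu_2+i\omega_2}$ expanded into real and imaginary parts; the key device there is to observe that $\nu_j\nabla\nu_j+\omega_j\nabla\omega_j=0$ forces $\nabla\nu_j=-\omega_j r_j$ and $\nabla\omega_j=\nu_j r_j$ for a vector field $r_j$, and then hypothesis \eqref{EQ:grad phi} gives $r_1=r_2$, after which both the real and imaginary parts of the gradient of the quotient are checked to vanish term by term. Your version works with $h=g_1\overline{g_2}$ (which equals the paper's quotient, since $|g_2|=1$) and replaces all of that real-coordinate bookkeeping with two complex identities: $\overline{g}\nabla g$ is purely imaginary for unit-modulus $g$, and $\overline{h}\nabla h=\overline{g_1}\nabla g_1-\overline{g_2}\nabla g_2$. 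This makes the gauge-invariant structure transparent and shortens the computation considerably; what the paper's approach buys in exchange is the explicit identification of the common field $r$ (which is exactly the phase gradient $\nabla\varphi_j$), an object that resonates with how the rest of the paper reconstructs $\omega\nabla\nu-\nu\nabla\omega$. One point in your favor: you explicitly flag that connectedness of $\Omega$ is needed to pass from $\nabla h\equiv 0$ to $h$ constant, a hypothesis the paper uses silently; your observation that the constant could differ across components of a disconnected domain is accurate and worth stating.
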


\begin{proof} It suffices to show that $\nabla \dfrac{\nu_1+i\omega_1}{\nu_2+i\omega_2} = 0$ in $\Omega$. We first compute 
\begin{equation*}
    \frac{\nu_1+i\omega_1}{\nu_2+i\omega_2} = (\nu_1\nu_2+\omega_1\omega_2) + i(\nu_2\omega_1 - \nu_1\omega_2)\,.
\end{equation*}
This immediately gives
\begin{equation}
\begin{aligned}
    \nabla \frac{\nu_1+i\omega_1}{\nu_2+i\omega_2} &= (\nu_1\nabla \nu_2 + \nu_2\nabla \nu_1 + \omega_1\nabla \omega_2 + \omega_2\nabla \omega_1)\\
    &\qquad + i(\nu_2\nabla \omega_1 + \omega_1\nabla \nu_2 - \nu_1\nabla \omega_2 - \omega_2\nabla \nu_1). 
\end{aligned}
\label{EQ:grad quotient}
\end{equation}
We know from taking the gradient of $\nu_j^2 + \omega_j^2= 1$ ($j=1,2$) that $\nu_j\nabla \nu_j + \omega_j\nabla \omega_j = 0$. It follows that there exists a vector-valued function $r_j(\bx)$ on $\Omega$ such that 
\begin{equation*}
    \nabla \nu_j = -\omega_j r_j,\qquad \nabla \omega_j = \nu_j r_j
\end{equation*}
in $\Omega$. Plugging this into \eqref{EQ:grad phi} then yields 
\begin{equation*}
    0 = \nu_1\nabla \omega_1 - \omega_1\nabla \nu_1 - \nu_2\nabla \omega_2 + \omega_2\nabla \nu_2 = (\nu_1^2 + \omega_1^2)r_1 - (\nu_2^2 + \omega_2^2)r_2 = r_1 - r_2,
\end{equation*}
so $r_1=r_2=:r$ and 
\begin{equation*}
    \nabla \nu_j = -\omega_j r,\qquad \nabla \omega_j = \nu_j r.
\end{equation*}
Finally, we can plug this into \eqref{EQ:grad quotient} and verify that both the real and imaginary parts are $0$ in $\Omega$:
\begin{equation*}
    \nu_1\nabla \nu_2 + \nu_2\nabla \nu_1 + \omega_1\nabla \omega_2 + \omega_2\nabla \omega_1 = (-\nu_1\omega_2 - \nu_2\omega_1 + \omega_1\nu_2 + \omega_2\nu_1)r = 0
\end{equation*}
and
\begin{align*}
    \nu_2\nabla \omega_1 + \omega_1\nabla \nu_2 - \nu_1\nabla \omega_2 - \omega_2\nabla \nu_1 = (\nu_2\nu_1 - \omega_1\omega_2 - \nu_1\nu_2 + \omega_2\omega_1)r = 0.
\end{align*}
The proof of the lemma is finished.
\end{proof}

\begin{proof}[Proof of Theorem~\ref{THM:k Unique}]
As in Lemma~\ref{LEM:k Evolve}, let us decompose $f_1$ and $f_2$ into their amplitudes and directions: 
\begin{align*}
    f_1 = a (\nu_1 + i\omega_1), \qquad f_2 = a (\nu_2 + i\omega_2).
\end{align*}
By Lemma~\ref{LEM:Phase shift}, it suffices to show that 
\begin{equation}
\nu_1\nabla \omega_1 - \omega_1\nabla \nu_1 = \nu_2\nabla \omega_2 - \omega_2\nabla \nu_2.
\end{equation}

For $j=1,2$, let $u_j$ denote the solution to \eqref{EQ:LS} with incident wave $f_j$, depending on $\kappa$. The evolution equation for the amplitude $|u|$ in~\eqref{EQ:Ampl-Real-Imag} says that for $j=1,2$,
\begin{align*}
    \dfrac{\partial |u_j|}{\partial z} = \kappa\Big[|u_j|(w_j\Delta v_j - v_j\Delta w_j) + 2\nabla |u_j|\cdot (w_j\nabla v_j - v_j\nabla w_j)\Big]
\end{align*}
At $z=0$, this equation reduces to 
\begin{align}
    \left.\dfrac{\partial |u_j|}{\partial z}\right|_{z=0} = \kappa\Big[a(\omega_j\Delta \nu_j - \nu_j\Delta \omega_j) + 2\nabla a\cdot (\omega_j\nabla \nu_j - \nu_j\nabla \omega_j)\Big].
\end{align}
Thanks to the assumptions that $|u_1|=|u_2|$ in $[0,\eps]\times\Omega$ and $\kappa\neq 0$ in $\Omega$, we realize that 
\begin{equation}\label{EQ:A}
    a(\omega_j\Delta \nu_j - \nu_j\Delta \omega_j) + 2\nabla a\cdot (\omega_j\nabla \nu_j - \nu_j\nabla \omega_j)
\end{equation}
is independent of $j\in \{1,2\}$. 

Next, we repeat the above procedure by first differentiating the evolution equation for the amplitude with respect to $z$ to obtain, for $j=1,2$,
\begin{align*}
    \frac{1}{\kappa}\dfrac{\partial^2 |u_j|}{\partial z^2} &= \kappa A(|u_j|,v_j,w_j)(w_j\Delta v_j - v_j\Delta w_j)\\
    &+ \kappa |u_j|C(|u_j|,v_j,w_j)\Delta v_j - \kappa |u_j| B(|u_j|,v_j,w_j)\Delta w_j\\
    &+ |u_j|w_j\Delta\Big[\kappa B(|u_j|,v_j,w_j)\Big] - |u_j|v_j\Delta \Big[\kappa C(|u_j|,v_j,w_j)\Big]\\
    &+2\nabla\Big[\kappa A(|u_j|,v_j,w_j)\Big]\cdot (w_j\nabla v_j - v_j\nabla w_j)\\
    &+ 2\Big[\kappa C(|u_j|,v_j,w_j)\Big]\nabla |u_j|\cdot\nabla v_j - 2\Big[\kappa B(|u_j|,v_j,w_j)\Big]\nabla |u_j|\cdot\nabla w_j \\
    &+ 2w_j\nabla |u_j|\cdot\nabla\Big[\kappa B(|u_j|,v_j,w_j)\Big] - 2v_j\nabla |u_j|\cdot\nabla\Big[\kappa C(|u_j|,v_j,w_j)\Big].
\end{align*}
At $z=0$, this equation reduces to
\begin{align*}
    \frac{1}{\kappa}\left.\dfrac{\partial^2 |u_j|}{\partial z^2}\right|_{z=0} &= \kappa A(a,\nu_j,\omega_j)(\omega_j\Delta \nu_j - \nu_j\Delta \omega_j)\\
    &+ \kappa a C(a,\nu_j,\omega_j)\Delta \nu_j - \kappa a B(a,\nu_j,\omega_j)\Delta \omega_j\\
    &+ a\omega_j\Delta\Big[\kappa B(a,\nu_j,\omega_j)\Big] - a\nu_j\Delta \Big[\kappa C(a,\nu_j,\omega_j)\Big]\\
    &+2\nabla\Big[\kappa A(a,\nu_j,\omega_j)\Big]\cdot (\omega_j\nabla \nu_j - \nu_j\nabla \omega_j)\\
    &+ 2\Big[\kappa C(a,\nu_j,\omega_j)\Big]\nabla a\cdot\nabla \nu_j - 2\Big[\kappa B(a,\nu_j,\omega_j)\Big]\nabla a\cdot\nabla \omega_j \\
    &+ 2\omega_j\nabla a\cdot\nabla\Big[\kappa B(a,\nu_j,\omega_j)\Big] - 2\nu_j\nabla a\cdot\nabla\Big[\kappa C(a,\nu_j,\omega_j)\Big].
\end{align*}
To make sense of the complicated right-hand side, we recognize that after multiplying through by $\kappa$, it is of the form 
\begin{align}\label{EQ:F}
F_1(a,\nu_j,\omega_j)\kappa^2 + F_2(a,\nu_j,\omega_j)\cdot \kappa\nabla \kappa + F_3(a,\nu_j,\omega_j)\kappa\Delta\kappa,
\end{align}
where $F_1$, $F_2$, and $F_3$ each depend only on $a$, $\nu_j$, $\omega_j$ and their spatial derivatives.

Thanks to the assumption that $|u_1|=|u_2|$ in $[0,\eps]\times\Omega$, we realize that the right-hand side is the same for $j=1,2$. Hence
\begin{align*}
    &\Big[F_1(a,\nu_1,\omega_1)-F_1(a,\nu_2,\omega_2)\Big]\kappa^2 + \Big[F_2(a,\nu_1,\omega_1)-F_2(a,\nu_2,\omega_2)\Big]\cdot \kappa\nabla\kappa\\
    &\qquad + \Big[F_3(a,\nu_1,\omega_1)-F_3(a,\nu_2,\omega_2)\Big]\kappa\Delta\kappa = 0.
\end{align*}
Recall that this holds for all $\kappa\in\cP$. By the assumption that the matrix $\cK$ in~\eqref{EQ:K} is invertible, we conclude that each of the three coefficients must be $0$.

In particular, from the third coefficient we have $F_3(a,\nu_1,\omega_1) = F_3(a,\nu_2,\omega_2)$. An inspection of the definition of $F_3$ reveals that
\begin{equation}\label{EQ:F3 k}
    F_3(a,\nu_j,\omega_j) = a\omega_j B(a,\nu_j,\omega_j) - a\nu_j C(a,\nu_j,\omega_j).
\end{equation}
Therefore, after dividing through by $a$ (which is nonzero in $\Omega$ by assumption) we obtain that 
\begin{equation*}
    \omega_j B(a,\nu_j,\omega_j) - \nu_j C(a,\nu_j,\omega_j)
\end{equation*}
is independent of $j\in\{1,2\}$. In the following, it will be useful to note also that its gradient 
\begin{align*}
    \nabla\Big[\omega_j B(a,\nu_j,\omega_j) - \nu_j C(a,\nu_j,\omega_j)\Big] &= (\omega_j\nabla B - \nu_j\nabla C) + (\nabla\omega_j B - \nabla\nu_j C) = \omega_j\nabla B - \nu_j\nabla C.
\end{align*}
is independent of $i\in\{1,2\}$ as well. (For clarity, we have suppressed the dependence of $B$ and $C$ on $a$, $\nu_j$, $\omega_j$.) To justify the last step above, recall the definitions \eqref{EQ:ABC} for $B$ and $C$ and observe that 
\begin{align*}
    \nabla\omega_j B - \nabla\nu_j C &= -(\omega_j\nabla\omega_j + \nu_j\nabla\nu_j)\left[\frac{\Delta a}{a} + (\omega_j\Delta \omega_j + \nu_j\Delta \nu_j)\right]\\
    &= 0\cdot\left[\frac{\Delta a}{a} + (\omega_j\Delta \omega_j + \nu_j\Delta \nu_j)\right] = 0,
\end{align*}
where $\omega_j\nabla\omega_j + \nu_j\nabla\nu_j = 0$ comes from taking the gradient of the equation $\omega_j^2+\nu_j^2 = 1$.

Likewise, from the second coefficient we have $F_2(a,\nu_1,\omega_1) = F_2(a,\nu_2,\omega_2)$. An inspection of the definition of $F_2$ reveals that 
\begin{equation}\label{EQ:F2}
\begin{aligned}
    F_2(a,\nu_j,\omega_j) &= 2a\omega_j \nabla B - 2a\nu_j\nabla C + 2A(\omega_j\nabla\nu_j - \nu_j\nabla\omega_j) + 2\omega_j B\nabla a - 2\nu_j C\nabla a\\
    &= 2A(\omega_j\nabla\nu_j - \nu_j\nabla\omega_j) + 2a(\omega_j\nabla B - \nu_j\nabla C) + 2(\omega_j B - \nu_j C)\nabla a.
\end{aligned}
\end{equation}
We just established in the previous paragraph that the second and third terms are independent of $j\in\{1,2\}$. Since the left-hand side is also independent of $j$, we deduce that the first term on the right-hand side 
\begin{equation*}
    2A(a,\nu_j,\omega_j)(\omega_j\nabla\nu_j - \nu_j\nabla\omega_j)    
\end{equation*}
is independent of $j$. Now recall from the first part of the proof that $A(a,\nu_j,\omega_j)$ is independent of $j$. Furthermore, the function
\begin{equation*}
    A(a,\nu_j,\omega_j) = \frac{1}{\kappa}\left.\pdr{|u_j|}{z}\right|_{z=0}
\end{equation*}
is nonzero in $\Omega$ by the assumption in the theorem statement. It follows that 
\begin{equation*}
    \omega_j\nabla\nu_j - \nu_j\nabla\omega_j
\end{equation*}
is independent of $j\in\{1,2\}$, as desired. The proof is complete.
\end{proof}

The same calculations can be done for the Gross-Pitaevskii~\eqref{EQ:GP} equation.
\begin{corollary}\label{COR:GP Unique} 
Fix $\beta\in\cP$. Let $f_1, f_2\in \cF$ be such that $\Lambda^{[0, \eps]}_{f_1}(\kappa,\beta) = \Lambda^{[0, \eps]}_{f_2}(\kappa,\beta)$ for all $\kappa\in\cP$ for the Gross-Pitaevskii model~\eqref{EQ:GP}. Then, under the same assumptions of~\Cref{THM:k Unique}, we have that $f_2(\bx) = f_1(\bx) e^{i\varphi}$ for some constant $\varphi$.
\end{corollary}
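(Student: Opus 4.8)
The plan is to follow the proof of \Cref{THM:k Unique} essentially verbatim, the only genuinely new ingredients being a Gross--Pitaevskii analogue of \Cref{LEM:k Evolve} and a check that the extra nonlinear terms do not disturb the rank argument. As before, I would use the hypothesis $|u_1|=|u_2|$ on $[0,\eps]\times\Omega$ (in particular at $z=0$) to write $f_1 = a(\nu_1+i\omega_1)$ and $f_2 = a(\nu_2+i\omega_2)$ with common amplitude $a$, and reduce via \Cref{LEM:Phase shift} to showing $\nu_1\nabla\omega_1 - \omega_1\nabla\nu_1 = \nu_2\nabla\omega_2 - \omega_2\nabla\nu_2$. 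Plugging $u=|u|(v+iw)$ into \eqref{EQ:GP} and separating real and imaginary parts, the nonlinear term $\beta|u|^2u = \beta|u|^3(v+iw)$ contributes terms proportional to $\beta|u|^3 v$ and $\beta|u|^3 w$. Solving the resulting linear system for the $z$-derivatives, the nonlinear contributions to $\partial_z|u|$ cancel (the two $\beta|u|^3 vw$ terms have opposite signs), so the amplitude equation $\partial_z|u| = \kappa A(|u|,v,w)$ is \emph{unchanged}, while the direction equations become $\partial_z v = \kappa B - \beta|u|^2 w$ and $\partial_z w = \kappa C + \beta|u|^2 v$, with the same $A,B,C$ from \eqref{EQ:ABC}.

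Because the amplitude equation is identical, the first-derivative step carries over unchanged: $\left.\partial_z|u_j|\right|_{z=0} = \kappa A(a,\nu_j,\omega_j)$, and $|u_1|=|u_2|$ together with $\kappa\neq 0$ forces $A(a,\nu_j,\omega_j)$ to be independent of $j$. For the second derivative I would differentiate $\partial_z|u| = \kappa A$ once more in $z$ and substitute the modified direction equations. Since each of $\partial_z|u|$, $\partial_z v$, $\partial_z w$ is a $\kappa$-linear term plus a $\beta$-term carrying no $\kappa$, and the product rule differentiates only one factor at a time, $\frac{1}{\kappa}\left.\partial_z^2|u_j|\right|_{z=0}$ splits into the same $\kappa$-linear expression as in the linear case plus a purely $\beta$-dependent correction $G$. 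Multiplying by $\kappa$ then yields the form \eqref{EQ:F} with the \emph{same} $F_1,F_2,F_3$, plus one extra term:
\begin{equation*}
\left.\partial_z^2|u_j|\right|_{z=0} = F_1\kappa^2 + F_2\cdot\kappa\nabla\kappa + F_3\kappa\Delta\kappa + \kappa G.
\end{equation*}

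The crux, and the step I expect to demand the most care, is to show that $G$ is independent of $j$, so that it cancels upon subtracting the $j=1$ and $j=2$ identities. I would compute $G$ at $z=0$ by collecting the $\beta$-terms generated when $\partial_z v = \kappa B - \beta a^2\omega_j$ and $\partial_z w = \kappa C + \beta a^2\nu_j$ are fed into $\partial_z A$. Using only $\nu_j^2+\omega_j^2 = 1$ and its gradient $\nu_j\nabla\nu_j + \omega_j\nabla\omega_j = 0$, I expect all dependence on $\nu_j,\omega_j$ to cancel, leaving $G = -a\,\Delta(\beta a^2) - 2\nabla a\cdot\nabla(\beta a^2)$, which depends only on the common amplitude $a$ and the fixed $\beta$. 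Hence $G$ is the same for $j=1,2$, and subtracting the two identities cancels it, reproducing precisely the linear-case relation: a combination of $\kappa$, $\nabla\kappa$, and $\Delta\kappa$, with coefficients the $j$-differences of $F_1,F_2,F_3$, that vanishes after dividing by the nonvanishing $\kappa$, with no extra $\kappa$-independent term. Consequently the unchanged matrix $\cK$ still suffices, and the full-rank hypothesis (still with $N_m\ge d+2$) together with the analysis of the coefficients $F_3$ and $F_2$ transfers word-for-word to conclude that $\omega_j\nabla\nu_j - \nu_j\nabla\omega_j$ is independent of $j$; \Cref{LEM:Phase shift} then gives $f_2 = f_1 e^{i\varphi}$. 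The only real obstacle is thus the computation verifying the phase-independence of $G$, which is exactly what lets the \emph{same} rank condition suffice rather than an augmented one.
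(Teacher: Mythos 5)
Your proposal is correct and follows essentially the same route as the paper's proof: the paper likewise leaves the amplitude equation $\partial_z|u|=\kappa A$ unchanged, collects the nonlinear contribution to $\left.\partial_z^2|u|\right|_{z=0}$ into a phase-independent term $F_0(a,\beta)$ (your $G$), and then reuses the identical rank argument on the unchanged matrix $\cK$ to conclude via \Cref{LEM:Phase shift}. Incidentally, your computed $G=-a\Delta(\beta a^2)-2\nabla a\cdot\nabla(\beta a^2)$ is the correct simplification; the paper records this term with coefficient $1$ on the gradient piece, an apparent typo that is immaterial since only the $j$-independence of the term is used.
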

\begin{proof}
The proof follows exactly the same procedure as that of~\Cref{THM:k Unique}. We only highlight the main differences. Let $u$ be a solution to \eqref{EQ:GP} and take the decomposition $u=|u|(v+iw)$. Then
the evolution equations for $|u|$, $v$, and $w$ are
\begin{equation}
\dfrac{\partial |u|}{\partial z} =\kappa A(|u|, v, w), \quad \dfrac{\partial v}{\partial z} =\kappa B(|u|, v, w)-\beta |u|^2 w, \qquad \dfrac{\partial w}{\partial z} =\kappa C(|u|, v, w)+\beta |u|^2 v,
\end{equation}
where the functions $A$, $B$, and $C$ are defined in~\eqref{EQ:ABC}.

The second-order derivative of $|u|$ at $z=0$ in this case can be written as
\begin{align*}
    \frac{1}{\kappa}\left.\dfrac{\partial^2 |u_j|}{\partial z^2}\right|_{z=0} &= \kappa A(a,\nu_j,\omega_j)(\omega_j\Delta \nu_j - \nu_j\Delta \omega_j)\\
    & + \kappa a C(a,\nu_j,\omega_j)\Delta \nu_j - \kappa a B(a,\nu_j,\omega_j)\Delta \omega_j\\
    & + a\omega_j\Delta\Big[\kappa B(a,\nu_j,\omega_j)\Big] - a\nu_j\Delta \Big[\kappa C(a,\nu_j,\omega_j)\Big]\\
    & +2\nabla\Big[\kappa A(a,\nu_j,\omega_j)\Big]\cdot (\omega_j\nabla \nu_j - \nu_j\nabla \omega_j)\\
    & + 2\Big[\kappa C(a,\nu_j,\omega_j)\Big]\nabla a\cdot\nabla \nu_j - 2\Big[\kappa B(a,\nu_j,\omega_j)\Big]\nabla a\cdot\nabla \omega_j \\
    & + 2\omega_j\nabla a\cdot\nabla\Big[\kappa B(a,\nu_j,\omega_j)\Big] - 2\nu_j\nabla a\cdot\nabla\Big[\kappa C(a,\nu_j,\omega_j)\Big]\\
    & + a\Big[\beta a^2 \nu_j\Delta \nu_j -\omega_j\Delta (\beta a^2 \omega_j)+(\beta a^2 \omega_j)\Delta\omega_j-\nu_j\Delta (\beta a^2 \nu_j)\Big]\\
    &+2\nabla a \cdot \Big[\beta a^2 \nu_j\nabla \nu_j-\omega_j\nabla(\beta a^2 \omega_j)+(\beta a^2 \omega_j)\nabla \omega_j-\nu_j\nabla(\beta a^2 \nu_j)\Big].
\end{align*}
It turns out that the last two terms simplify to
\[
    F_0(a, \beta):=-a\Delta(\beta a^2)-\nabla a\cdot \nabla (\beta a^2)
\]
which is independent of $\nu_j$ and $\omega_j$. Therefore, the right-hand side, after multiplying through by $\kappa$, is of the form 
\begin{align}
F_0(a, \beta) \kappa+ F_1(a,\nu_j,\omega_j)\kappa^2 + F_2(a,\nu_j,\omega_j)\cdot \kappa\nabla \kappa + F_3(a,\nu_j,\omega_j)\kappa\Delta\kappa,
\end{align}
where $F_1$, $F_2$, and $F_3$ are the same as defined in~\eqref{EQ:F}.

Thanks to the assumption that $|u_1|=|u_2|$ in $[0,\eps]\times\Omega$, we realize that the right-hand side is the same for $j=1,2$. Hence
\begin{align*}
    &\Big[F_1(a,\nu_1,\omega_1)-F_1(a,\nu_2,\omega_2)\Big]\kappa^2 + \Big[F_2(a,\nu_1,\omega_1)-F_2(a,\nu_2,\omega_2)\Big]\cdot \kappa\nabla\kappa\\
    &\qquad + \Big[F_3(a,\nu_1,\omega_1)-F_3(a,\nu_2,\omega_2)\Big]\kappa\Delta\kappa = 0.
\end{align*}
The rest follows the same argument as that in~\Cref{THM:k Unique}.
\end{proof}

\begin{remark}\label{REMARK:beta in GP}
It is clear from the proof of~\Cref{COR:GP Unique} that one can not use $\beta$ to help with phase reconstruction, as $\beta$ is not coupled to the phase variables $\nu$ and $\omega$ in $F_0$. This contrasts with the result we will show in the next section on the nonlinearity described by the quadratic nonlinear Schr\"{o}dinger model~\eqref{EQ:QNLS}.
\end{remark}

\section{Recovering phase with nonlinear media diversity}
\label{SEC:NonlinearMedia}
    
We now show a case where the nonlinearity of the media improves phase retrieval. More precisely, for the quadratic nonlinear Schr\"{o}dinger model~\eqref{EQ:QNLS}, we show that one can uniquely reconstruct the phase of the incident wave. That is, nonlinearity allows us to eliminate the arbitrary constant phase shift in the reconstruction.

To focus only on the nonlinearity, we fix the coefficient $\kappa=1$.  

\begin{theorem}\label{THM:Unique} 
Let $f_1, f_2\in \cF$ be such that $\Lambda^{[0, \eps]}_{f_1}(1,\beta) = \Lambda^{[0, \eps]}_{f_2}(1,\beta)$ for all $\beta\in\cP$ for the quadratic nonlinear Schr\"{o}dinger model~\eqref{EQ:QNLS}. Assume that there exists at least $N_m\ge d+4$ different $\beta_1,\dots,\beta_{N_m}\in \cP$ such that the matrix
    \begin{equation}\label{EQ:Beta}
    \cB(\bx):=
    \begin{pmatrix}
        1& \beta_1 & \beta_1^2 & \partial_{x_1} \beta_1 & \cdots & \partial_{x_d} \beta_1 & \Delta \beta_1\\
        \vdots & \vdots & \vdots & \vdots & \ddots & \vdots &\vdots\\
        1& \beta_m & \beta_m^2 & \partial_{x_1} \beta_m & \cdots & \partial_{x_d} \beta_m & \Delta \beta_m\\           
        \vdots & \vdots & \vdots & \vdots & \ddots & \vdots &\vdots\\
        1&\beta_{N_m} & \beta_{N_m}^2 & \partial_{x_1} \beta_{N_m} & \cdots & \partial_{x_d} \beta_{N_m} & \Delta \beta_{N_m}
        \end{pmatrix}
    \end{equation}
    is full-rank for each $\bx\in\Omega$. Then $f_1=f_2$.
\end{theorem}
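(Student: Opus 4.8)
The plan is to follow the same differentiate-in-$z$-at-$z=0$ strategy as in the proof of \Cref{THM:k Unique}, now exploiting that the quadratic nonlinearity $\beta u^2$ genuinely couples $\beta$ to the phase variables, in contrast to the Gross--Pitaevskii case of \Cref{REMARK:beta in GP}. Evaluating the data at $z=0$ first gives $|f_1|=|f_2|=:a$, so we may write $f_1 = a(\nu_1+i\omega_1)$ and $f_2 = a(\nu_2+i\omega_2)$ with $\nu_j^2+\omega_j^2 = 1$; the goal is now to show $\nu_1=\nu_2$ \emph{and} $\omega_1=\omega_2$, with no residual constant phase. The first task is to record the analogue of \Cref{LEM:k Evolve} for \eqref{EQ:QNLS}: writing $u=|u|(v+iw)$, substituting into \eqref{EQ:QNLS} with $\kappa=1$, and solving the resulting linear system exactly as in \Cref{LEM:k Evolve} yields
\begin{equation*}
\frac{\partial |u|}{\partial z} = A - \beta|u|^2 w,\qquad \frac{\partial v}{\partial z} = B - \beta|u|\,vw,\qquad \frac{\partial w}{\partial z} = C + \beta|u|\,v^2,
\end{equation*}
where $A,B,C$ are the functions defined in \eqref{EQ:ABC}. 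The structural point is that, unlike in the Gross--Pitaevskii case, the nonlinear corrections are algebraic in $(|u|,v,w)$ and carry explicit factors of $w$ and $v$.

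I would extract $\omega$ from the first $z$-derivative. Evaluating the amplitude equation at $z=0$ gives
\begin{equation*}
\left.\frac{\partial |u_j|}{\partial z}\right|_{z=0} = A(a,\nu_j,\omega_j) - \beta a^2\omega_j,\qquad j=1,2.
\end{equation*}
Since $|u_1|=|u_2|$ on $[0,\eps]\times\Omega$ for every $\beta\in\cP$, the right-hand sides agree for all $\beta$, so $A(a,\nu_1,\omega_1)-A(a,\nu_2,\omega_2) = \beta a^2(\omega_1-\omega_2)$. Because the columns of $\cB$ corresponding to $1$ and $\beta$ are linearly independent by the full-rank hypothesis, the $\beta^0$ and $\beta^1$ parts must vanish separately; the latter forces $a^2\omega_1=a^2\omega_2$, and since $a\neq0$ in $\Omega$ we obtain $\omega_1=\omega_2$. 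Notice that only $a\neq0$ is used, which is why the nonvanishing-derivative hypothesis of \Cref{THM:k Unique} is absent here.

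The remaining task, $\nu_1=\nu_2$, comes from the second $z$-derivative. Differentiating the amplitude equation once more in $z$, evaluating at $z=0$, and substituting the evolution equations above, one finds that $\left.\partial_z^2|u_j|\right|_{z=0}$ is a sum of terms each carrying exactly one of the factors $1,\beta,\beta^2,\partial_{x_1}\beta,\dots,\partial_{x_d}\beta,\Delta\beta$ --- precisely the $d+4$ columns of $\cB$ --- with no cross terms: the two $z$-derivatives produce $\beta^2$ only from the purely algebraic correction $-\beta\,\partial_z(|u|^2 w)$ (which carries no spatial derivative of $\beta$), while $\nabla\beta$ and $\Delta\beta$ appear only when $\partial_z$ commutes past $\nabla$ or $\Delta$ inside $\partial_z A$, each with a $\beta$-free coefficient. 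The full-rank hypothesis on $\cB$ then forces all $d+4$ coefficient families to agree for $j=1,2$. The decisive one is the coefficient of $\Delta\beta$: it arises solely through $\Delta(\beta|u|vw)$ and $\Delta(\beta|u|v^2)$ inside $\partial_z\Delta v$ and $\partial_z\Delta w$, and collapses, using $\nu_j^2+\omega_j^2=1$, to $-a^2\nu_j$. Matching this coefficient gives $\nu_1=\nu_2$. Together with $\omega_1=\omega_2$ and the common amplitude $a$, this yields $f_1=f_2$; in particular, unlike \Cref{THM:k Unique}, no appeal to \Cref{LEM:Phase shift} is needed.

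The main obstacle is the second-derivative bookkeeping. One must verify that the expansion of $\left.\partial_z^2|u_j|\right|_{z=0}$ closes on exactly the monomials $\{1,\beta,\beta^2,\nabla\beta,\Delta\beta\}$, so that the $d+4$ columns of $\cB$ are both sufficient and necessary, and then isolate the $\Delta\beta$ coefficient and simplify it to the clean form $-a^2\nu_j$. Fortunately, the coefficients of $1,\beta,\beta^2,\nabla\beta$ carry no information we require, so the lengthy computation only needs to be carried through in full for the $\Delta\beta$ term.
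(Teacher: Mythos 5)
Your proposal is correct and follows essentially the same route as the paper's proof: the same amplitude--direction decomposition, the same QNLS evolution equations (the paper's Lemma~\ref{LEM:Evolve}), extraction of $\omega_1=\omega_2$ from the first $z$-derivative, and extraction of $\nu_1=\nu_2$ from the $\Delta\beta$-coefficient $F_5=-a^2\nu_j$ of the second $z$-derivative, with your structural accounting of which monomials $\{1,\beta,\beta^2,\nabla\beta,\Delta\beta\}$ can appear matching the paper's explicit expansion. The only cosmetic difference is that for the first-derivative step the paper separates the $\beta^0$ and $\beta^1$ parts by choosing two constant media $\beta\equiv 1,2$, whereas you invoke the linear independence of the corresponding columns of $\cB$; both are valid.
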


To prove this uniqueness result, we utilize the following lemma. Its proof follows the same line of calculations as the proof of Lemma \ref{LEM:k Evolve}, so we omit it. 
\begin{lemma}\label{LEM:Evolve} Let $u$ be a solution to \eqref{EQ:QNLS} with $\kappa = 1$ and decompose $u=|u|(v+iw)$ into its amplitude $|u|$ and direction $v+iw$, where $v$ and $w$ are real-valued functions. Then the time evolution equation \eqref{EQ:QNLS} for $u$ implies the following evolution equations for $|u|$, $v$, and $w$:
\begin{equation}\label{EQ:Ampl-Real-Imag QNLS}
\begin{aligned}
    \dfrac{\partial |u|}{\partial z} &= A(|u|,v,w)-\beta|u|^2 w,\\ \qquad 
    \dfrac{\partial v}{\partial z} &= B(|u|,v,w)-\beta|u|vw,\\ \qquad 
    \dfrac{\partial w}{\partial z} &= C(|u|,v,w)+\beta|u|v^2\,.
\end{aligned}
\end{equation}
in $\Omega\times [0,\eps)$, where the functions $A$, $B$, and $C$ are defined as in~\eqref{EQ:ABC}.
\end{lemma}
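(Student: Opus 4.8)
The plan is to mirror the derivation of Lemma~\ref{LEM:k Evolve}, isolating the single new ingredient contributed by the nonlinear term $\beta u^2$. First I would substitute the decomposition $u = |u|(v+iw)$ into the quadratic nonlinear Schr\"odinger equation~\eqref{EQ:QNLS} with $\kappa=1$, i.e. $i\partial_z u + \Delta u + \beta u^2 = 0$. The terms $i\partial_z u$ and $\Delta u$ expand exactly as in the linear case, the latter via the Leibniz rule $\Delta\big(|u|(v+iw)\big) = (\Delta|u|)(v+iw) + 2\nabla|u|\cdot\nabla(v+iw) + |u|\Delta(v+iw)$. The genuinely new piece is the nonlinearity, which I would expand as $\beta u^2 = \beta|u|^2(v+iw)^2 = \beta|u|^2\big[(v^2-w^2) + 2ivw\big]$, so that its contribution to the real part is $\beta|u|^2(v^2-w^2)$ and to the imaginary part is $2\beta|u|^2 vw$.

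Separating the equation into real and imaginary parts then produces two scalar identities relating $\partial_z|u|$, $\partial_z v$, $\partial_z w$ to the spatial derivatives of $|u|$, $v$, $w$ together with the two nonlinear sources above. As in the linear case I would close the system with a third equation $v\,\partial_z v + w\,\partial_z w = 0$, obtained by differentiating the normalization $v^2+w^2=1$ in $z$. Ordering the unknowns as $(\partial_z|u|, \partial_z v, \partial_z w)$, the coefficient matrix is
\[
\begin{pmatrix} v & |u| & 0 \\ w & 0 & |u| \\ 0 & v & w \end{pmatrix},
\]
whose determinant equals $-|u|(v^2+w^2) = -|u|$ and is therefore nonzero under the standing assumption $|u|\neq 0$; hence the system is uniquely solvable for the three $z$-derivatives.

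The solution is obtained as in Lemma~\ref{LEM:k Evolve}: taking the combination $v\times(\text{imaginary part}) + w\times(\text{real part})$ and using $v\,\partial_z v + w\,\partial_z w = 0$ isolates $\partial_z|u|$, after which $\partial_z v$ and $\partial_z w$ follow by back-substitution. The $\beta$-independent contributions reproduce $A$, $B$, and $C$ of~\eqref{EQ:ABC} verbatim (with $\kappa=1$), so this part of the bookkeeping may be quoted directly from the proof of Lemma~\ref{LEM:k Evolve}. Tracking the new nonlinear sources through the same solve, one finds that they assemble into $-\beta|u|^2 w$ in the amplitude equation, $-\beta|u|vw$ in the $v$-equation, and $+\beta|u|v^2$ in the $w$-equation, which is the claimed system~\eqref{EQ:Ampl-Real-Imag QNLS}.

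I expect the only real work to be the algebraic bookkeeping, and the key to keeping it clean is the repeated use of the normalization $v^2+w^2=1$. For the $\beta$-free part this is the identity $v\nabla v + w\nabla w = 0$ (the gradient of $v^2+w^2=1$), which collapses the first-order terms, e.g. $(1-v^2)\nabla w + vw\nabla v = w\,(w\nabla w + v\nabla v) = 0$, exactly as in the linear lemma. For the new part the same normalization condenses the nonlinear remainders algebraically; for instance the nonlinear contribution to $\partial_z|u|$ is $-2\beta|u|^2 v^2 w + \beta|u|^2 w(v^2-w^2) = -\beta|u|^2 w(v^2+w^2) = -\beta|u|^2 w$. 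Since no new analytic difficulty arises beyond tracking the quadratic coupling between $v$ and $w$ introduced by $u^2$, omitting the full computation (as the paper does) is justified.
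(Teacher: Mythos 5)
Your proposal is correct and follows exactly the route the paper intends: the paper omits this proof, noting only that it ``follows the same line of calculations as the proof of Lemma~\ref{LEM:k Evolve},'' and your argument is precisely that calculation, with the new terms $\beta|u|^2(v^2-w^2)$ and $2\beta|u|^2vw$ correctly tracked through the $3\times 3$ solve and collapsed via $v^2+w^2=1$ into $-\beta|u|^2w$, $-\beta|u|vw$, and $+\beta|u|v^2$. All the algebra you display (including the determinant $-|u|$ and the cancellation identities) checks out.
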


\begin{proof}[Proof of Theorem~\ref{THM:Unique}]
Following the same procedure as before, let us decompose $f_1$ and $f_2$ into their amplitudes and directions: 
\begin{align*}
    f_1 = a (\nu_1 + i\omega_1), \qquad f_2 = a (\nu_2 + i\omega_2).
\end{align*}
We want to show that $f_1=f_2$, i.e. $\nu_1 = \nu_2$ and $\omega_1 = \omega_2$.

Let $\beta\in \cP$, and for $j=1,2$ let $u_j$ denote the solution to \eqref{EQ:QNLS} with initial condition $f_j$. The evolution equation for the amplitude in~\eqref{EQ:Ampl-Real-Imag QNLS}  says that for $j=1,2$,
\begin{align*}
    \dfrac{\partial |u_j|}{\partial z} = |u_j|(w_j\Delta v_j - v_j\Delta w_j) + 2\nabla |u_j|\cdot (w_j\nabla v_j - v_j\nabla w_j) - \beta |u_j|^2 w_j
\end{align*}
At $z=0$, this equation reduces to 
\begin{align}\label{EQ:retrieve omega}
    \left.\dfrac{\partial |u_j|}{\partial z}\right|_{z=0} = a(\omega_j\Delta \nu_j - \nu_j\Delta \omega_j) + 2\nabla a\cdot (\omega_j\nabla \nu_j - \nu_j\nabla \omega_j) - \beta a^2 \omega_j.
\end{align}
Thanks to the assumption that $|u_1|=|u_2|$ in $[0,\eps]\times\Omega$, we realize that 
\begin{align*}
    a(\omega_1\Delta \nu_1 - \nu_1\Delta \omega_1) + 2\nabla a\cdot (\omega_1\nabla \nu_1& - \nu_1\nabla \omega_1) - \beta a^2 \omega_1 = \\
    &a(\omega_2\Delta \nu_2 - \nu_2\Delta \omega_2) + 2\nabla a\cdot (\omega_2\nabla \nu_2 - \nu_2\nabla \omega_2) - \beta a^2 \omega_2,
\end{align*}
or after rearranging,
\begin{align*}
    \Big[a(\omega_1\Delta \nu_1 - \nu_1\Delta \omega_1) + 2\nabla a\cdot (\omega_1\nabla \nu_1& - \nu_1\nabla \omega_1) - a(\omega_2\Delta \nu_2 - \nu_2\Delta \omega_2) - 2\nabla a\cdot (\omega_2\nabla \nu_2 - \nu_2\nabla \omega_2)\Big]\\
    &- \beta a^2(\omega_1 - \omega_2) = 0.
\end{align*}
Recalling that this holds for all $\beta\in \cP$ allows us to pick two different $\beta$ (for example taking $\beta\equiv 1$ and $\beta\equiv 2$) to conclude that 
\begin{align*}
    a(\omega_1\Delta \nu_1 - \nu_1\Delta \omega_1) + 2\nabla a\cdot (\omega_1\nabla \nu_1 - \nu_1\nabla \omega_1) - a(\omega_2\Delta \nu_2 - \nu_2\Delta \omega_2) \qquad \qquad & \\  - 2\nabla a\cdot (\omega_2\nabla \nu_2 - \nu_2\nabla \omega_2) &= 0,\\[1ex]
    -a^2(\omega_1 - \omega_2) &= 0.
\end{align*}
The second equation, in particular, informs us that 
\begin{align}\label{EQ:omega}
    \omega_1 = \omega_2.
\end{align}

It remains to show that $\nu_1 = \nu_2$. To do this, we repeat the above procedure by first differentiating the amplitude evolution equation in~\eqref{EQ:Ampl-Real-Imag QNLS} with respect to $z$ to obtain, for $j=1,2$,
\begin{align}\label{EQ:Evolve Amp2}
    \dfrac{\partial^2 |u_j|}{\partial z^2} &= \dfrac{\partial |u_j|}{\partial z}(w_j\Delta v_j - v_j\Delta w_j) + |u_j|\left[\dfrac{\partial w_j}{\partial z}\Delta v_j - \dfrac{\partial v_j}{\partial z}\Delta w_j + w_j\Delta \dfrac{\partial v_j}{\partial z} - v_j\Delta \dfrac{\partial w_j}{\partial z}\right]\nonumber\\
    &\qquad +2\nabla \dfrac{\partial |u_j|}{\partial z}\cdot (w_j\nabla v_j - v_j\nabla w_j) \\
    &\qquad + 2\nabla |u_j|\cdot \left[\dfrac{\partial w_j}{\partial z}\nabla v_j - \dfrac{\partial v_j}{\partial z}\nabla w_j + w_j\nabla \dfrac{\partial v_j}{\partial z} - v_j\nabla \dfrac{\partial w_j}{\partial z}\right] \nonumber\\
    &\qquad -\beta\Big[2|u_j|\dfrac{\partial |u_j|}{\partial z}w_j + |u_j|^2\dfrac{\partial w_j}{\partial z}\Big].\nonumber
\end{align}
Plugging the equations in~\eqref{EQ:Ampl-Real-Imag QNLS} into \eqref{EQ:Evolve Amp2} yields 
\begin{align*}
    &\dfrac{\partial^2 |u_j|}{\partial z^2} = \Big[A(|u_j|,v_j,w_j) - \beta |u_j|^2 w_j\Big](w_j\Delta v_j - v_j\Delta w_j)\\
    &+ |u_j|\Big[C(|u_j|,v_j,w_j) + \beta |u_j| (v_j)^2\Big]\Delta v_j - |u_j| \Big[B(|u_j|,v_j,w_j) - \beta |u_j| v_j w_j\Big]\Delta w_j\\
    &+ |u_j|w_j\Delta\Big[B(|u_j|,v_j,w_j) - \beta |u_j| v_j w_j\Big] - |u_j|v_j\Delta \Big[C(|u_j|,v_j,w_j) + \beta |u_j| (v_j)^2\Big]\\
    &+2\nabla\Big[A(|u_j|,v_j,w_j) - \beta |u_j|^2 w_j\Big]\cdot (w_j\nabla v_j - v_j\nabla w_j)\\
    &+ 2\Big[C(|u_j|,v_j,w_j) + \beta |u_j| (v_j)^2\Big]\nabla |u_j|\cdot\nabla v_j - 2\Big[B(|u_j|,v_j,w_j) - \beta |u_j| v_j w_j\Big]\nabla |u_j|\cdot\nabla w_j \\
    & + 2w_j\nabla |u_j|\cdot\nabla\Big[B(|u_j|,v_j,w_j) - \beta |u_j| v_j w_j\Big] - 2v_j\nabla |u_j|\cdot\nabla\Big[C(|u_j|,v_j,w_j) + \beta |u_j| (v_j)^2\Big]\\
    & -2\beta |u_j|\Big[A(|u_j|,v_j,w_j) - \beta |u_j|^2 w_j\Big]w_j -\beta|u_j|^2\Big[C(|u_j|,v_j,w_j) + \beta |u_j| (v_j)^2\Big]\,.
\end{align*}
At $z=0$, this equation reduces to
\begin{align*}
    \left.\dfrac{\partial^2 |u_j|}{\partial z^2}\right|_{z=0} &= \Big[A(a,\nu_j,\omega_j) - \beta a^2 \omega_j\Big](\omega_j\Delta \nu_j - \nu_j\Delta \omega_j)\\
    &+ a\Big[C(a,\nu_j,\omega_j) + \beta a \nu_j^2\Big]\Delta \nu_j - a \Big[B(a,\nu_j,\omega_j) - \beta a \nu_j \omega_j\Big]\Delta \omega_j\\
    &  + a\omega_j\Delta\Big[B(a,\nu_j,\omega_j) - \beta a \nu_j \omega_j\Big] - a\nu_j\Delta \Big[C(a,\nu_j,\omega_j) + \beta a \nu_j^2\Big]\\
    & +2\nabla\Big[A(a,\nu_j,\omega_j) - \beta a^2 \omega_j\Big]\cdot (\omega_j\nabla \nu_j - \nu_j\nabla \omega_j)\\
    &  + 2\Big[C(a,\nu_j,\omega_j) + \beta a \nu_j^2\Big]\nabla a\cdot\nabla \nu_j - 2\Big[B(a,\nu_j,\omega_j) - \beta a \nu_j \omega_j\Big]\nabla a\cdot\nabla \omega_j \\
    & + 2\omega_j\nabla a\cdot\nabla\Big[B(a,\nu_j,\omega_j) - \beta a \nu_j \omega_j\Big] - 2\nu_j\nabla a\cdot\nabla\Big[C(a,\nu_j,\omega_j) + \beta a \nu_j^2\Big]\\
    & -2\beta a\Big[A(a,\nu_j,\omega_j) - \beta a^2 \omega_j\Big]\omega_j -\beta a^2\Big[C(a,\nu_j,\omega_j) + \beta a \nu_j^2\Big]\,.
\end{align*}
To make sense of the complicated right-hand side, we recognize that it is of the form 
\begin{align}
    F_1(a,\nu_j,\omega_j) + F_2(a,\nu_j,\omega_j)\beta + F_3(a,\nu_j,\omega_j)\beta^2 + F_4(a,\nu_j,\omega_j)\cdot \nabla\beta + F_5(a,\nu_j,\omega_j)\Delta\beta,
    \label{EQ:retrieve nu 1}
\end{align}
where $F_1$, $F_2$, $F_3$, $F_4$, and $F_5$ each depend only on $a$, $\nu_j$, $\omega_j$ and their spatial derivatives.

Thanks to the assumption that $|u_1|=|u_2|$ in $[0,\eps]\times\Omega$, we realize that the right-hand side is the same for $j=1,2$. Hence
\begin{align*}
    &\Big[F_1(a,\nu_1,\omega_1)-F_1(a,\nu_2,\omega_2)\Big] + \Big[F_2(a,\nu_1,\omega_1)-F_2(a,\nu_2,\omega_2)\Big]\beta + \Big[F_3(a,\nu_1,\omega_1)-F_3(a,\nu_2,\omega_2)\Big]\beta^2\nonumber\\
    &\qquad + \Big[F_4(a,\nu_1,\omega_1)-F_4(a,\nu_2,\omega_2)\Big]\cdot \nabla\beta + \Big[F_5(a,\nu_1,\omega_1)-F_5(a,\nu_2,\omega_2)\Big]\Delta\beta = 0. 
\end{align*}
By the assumption that the $\beta$ matrix $\cB$ in~\eqref{EQ:Beta} is invertible, we conclude that 
\[
    F_k(a, \nu_1, \omega_1)=F_k(a, \nu_2, \omega_2), \quad 1\le k\le 5\,.
\]

In particular, from the fifth coefficient we have $F_5(a,\nu_1,\omega_1) = F_5(a,\nu_2,\omega_2)$. An inspection of the definition of $F_5$ reveals that
\begin{equation}\label{EQ:retrieve nu 2}
    F_5(a,\nu_j,\omega_j) = -(a\omega_j)( a\nu_j\omega_j) - (a\nu_j)(a\nu_j^2) = -a^2\nu_j(\nu_j^2+\omega_j^2) = -a^2\nu_j.
\end{equation}
Thus $-a^2\nu_1 = -a^2\nu_2$, and so
\begin{align}\label{EQ:nu}
   \nu_1 = \nu_2.
\end{align}
The proof is complete upon combining this equality with \eqref{EQ:omega}.
\end{proof}

\begin{remark}
    The same calculation can be conducted to show that this technique also works for nonlinear media with the term $\beta u^2$ replaced with $\beta u^3$. However, the technique also fails to work for linear media with the term $\beta u$. Also, the technique will not work with media governed by the nonlinearity given by the Gross-Pitaevskii model~\eqref{EQ:GP}; see~\Cref{REMARK:beta in GP}.
\end{remark}

\section{Analytic phase reconstruction}
\label{SEC:Algo}

One key feature of the uniqueness proofs in the previous sections is that the theoretical derivations can be directly implemented as phase reconstruction algorithms.

\subsection{Reconstruction from \texorpdfstring{$\kappa$}{} diversity}

From $\kappa$ diversity, we can reconstruct the phase of $f=a(\nu+i\omega)=a e^{i\varphi}$ up to a constant. That is, we can reconstruct the phase gradient
\[
    -\nabla \varphi =\omega\nabla \nu-\nu \nabla \omega\,.
\]
From the proof of~\Cref{THM:k Unique}, more precisely~\eqref{EQ:F}, we have that
\begin{align*}
    \frac{1}{\kappa}\left.\dfrac{\partial^2 |u|}{\partial z^2}\right|_{z=0}=F_1(a,\nu,\omega)\kappa + F_2(a,\nu,\omega)\cdot \nabla \kappa + F_3(a,\nu,\omega) \Delta\kappa
\end{align*}
This allows us to form the linear system
\begin{equation}\label{EQ:K System}
        \cK 
        \begin{pmatrix}
        F_1\\ F_{2,x_1}\\ \vdots \\ \vdots\\
            \vdots\\ F_{2, x_d} \\ F_3    
        \end{pmatrix}
        =\begin{pmatrix}
            \frac{1}{\kappa_1}\frac{\partial^2 |u_1|}{\partial z^2}|_{z=0}\\
            \frac{1}{\kappa_2}\frac{\partial^2 |u_2|}{\partial z^2}|_{z=0}\\
            \vdots\\
            \vdots\\
            \vdots\\
            \frac{1}
            {\kappa_{N_m}}\frac{\partial^2 |u_{N_m}|}{\partial z^2}|_{z=0}            
        \end{pmatrix}
\end{equation}
at every $\bx\in\Omega$. For $m=1,2,\dots, N_m$, the notation $u_m$ denotes the solution to \eqref{EQ:LS} with $\kappa = \kappa_m$. Given measured intensity data, we can evaluate the right-hand side by a finite difference scheme, and then solve the system, in the least-squares sense when $N_m\ge d+2$, to find $F_1$, $F_2$, and $F_3$. Meanwhile, from~\eqref{EQ:F3 k}, we have that 
\begin{equation*}
    \frac{1}{a} F_3(a,\nu,\omega) = \omega B(a,\nu,\omega) - \nu C(a,\nu,\omega).
\end{equation*}
This, together with ~\eqref{EQ:F2}, leads to the reconstruction of the phase gradient as
\begin{equation}\label{EQ:Grad-Phi-Rec}
    \omega\nabla\nu - \nu\nabla\omega=\frac{1}{2A}F_2(a,\nu,\omega) -\frac{1}{A}\nabla F_3\,.
\end{equation}
Therefore, from the data, we compute $A=\dfrac{1}{\kappa}\dfrac{\partial|u|}{\partial z}\Big|_{z=0}$ using the first equation in~\eqref{EQ:Ampl-Real-Imag}. We then form the $\cK$ matrix and solve for $F_1$, $F_2$, and $F_3$. The final reconstruction follows from~\eqref{EQ:Grad-Phi-Rec}. 

In practice, we can estimate $A$ as the average of its value from all $N_m$ media as
\begin{equation}\label{EQ:Average of A}
    A=\frac{1}{N_m}\sum_{m=1}^{N_m}\dfrac{1}{\kappa_m}\left.\dfrac{\partial|u_m|}{\partial z}\right|_{z=0}\,.
\end{equation}
This averaging is useful to average out noise in measured data.

\begin{algorithm}[!htb]
\caption{Reconstructing Phase Gradient $\nabla \varphi$}
\label{ALGO:Phase Grad Rec}
\, Input: $a:=|f|$, $\{\kappa_m, |u_m|\}_{m=1}^{N_m}$
\begin{algorithmic}[1]
  \State Evaluate $\{\frac{1}
            {\kappa_{m}}\frac{\partial |u_m|}{\partial z}|_{z=0}, \frac{1}
            {\kappa_{m}}\frac{\partial^2 |u_m|}{\partial z^2}|_{z=0}\}_{m=1}^{N_m}$, and compute $A$ from~\eqref{EQ:Average of A}
  \State Form the matrix $\cK$ in~\eqref{EQ:K} and the right-hand side of~\eqref{EQ:K System}
  \State Solve~\eqref{EQ:K System} for $\{F_1, F_2, F_3\}$
  \State Reconstruct $-\nabla \varphi$ following~\eqref{EQ:Grad-Phi-Rec}
\end{algorithmic}
\end{algorithm}
 
The complete procedure is prescribed in~\Cref{ALGO:Phase Grad Rec}. In the algorithm, we used finite difference approximation to the first- and second-order derivatives of $|u|$ with respect to $z$.

\subsection{Reconstruction from \texorpdfstring{$\beta$}{} diversity}

With $\beta$ diversity, we can reconstruct the phase of the incident wave $f(\bx)=a(\bx)\big(\nu(\bx)+i\omega(\bx)\big)$ uniquely. This is done by reconstructing the real and imaginary parts of $f$.

From~\eqref{EQ:retrieve omega}, we have that the data corresponding to medium $\beta_m$ ($1\le m\le N_m$) at $z=0$ satisfy
\begin{equation*}
    \left.\dfrac{\partial |u_m|}{\partial z}\right|_{z=0} = a(\omega\Delta \nu - \nu\Delta \omega) + 2\nabla a\cdot (\omega\nabla \nu - \nu\nabla \omega) - \beta_m a^2 \omega\,.
\end{equation*}
Taking two different $\beta$ gives us
\[
 \left.\dfrac{\partial |u_1|}{\partial z}\right|_{z=0} - \left.\dfrac{\partial |u_2|}{\partial z}\right|_{z=0} = a^2(\beta_2 - \beta_1) \omega,
\]
which leads to the reconstruction of $\omega$ as
\begin{equation}\label{EQ:omega Rec}
 \omega=\frac{1}{a^2(\beta_2 - \beta_1)}\left(\left.\dfrac{\partial |u_1|}{\partial z}\right|_{z=0} - \left.\dfrac{\partial |u_2|}{\partial z}\right|_{z=0}\right)\,.
\end{equation}

Meanwhile, from~\eqref{EQ:retrieve nu 1}, we have
\[
\left.\frac{\partial^2 |u_m|}{\partial z^2}\right|_{z=0} = F_1(a,\nu,\omega) + F_2(a,\nu,\omega)\beta_m + F_3(a,\nu,\omega)\beta_m^2 + F_4(a,\nu,\omega)\cdot \nabla\beta_m + F_5(a,\nu,\omega)\Delta\beta_m\,.
\]
We can therefore form the linear system
\begin{equation}\label{EQ:B System}
        \cB 
        \begin{pmatrix}
        F_1\\ F_2\\ F_3\\ F_{4,x_1}\\ \vdots \\ F_{4, x_d} \\ F_5
        \end{pmatrix}
        =\begin{pmatrix}
             \frac{\partial^2 |u_1|}{\partial z^2}|_{z=0}\\
             \frac{\partial^2 |u_2|}{\partial z^2}|_{z=0}\\
            \vdots\\
            \vdots\\
            \vdots\\
             \frac{\partial^2 |u_{N_m}|}{\partial z^2}|_{z=0}   
        \end{pmatrix}
\end{equation}
at every $\bx\in\Omega$. We can invert this system to compute $\{F_1, \cdots, F_5\}$. According to ~\eqref{EQ:retrieve nu 2}, $\nu$ can be reconstructed as
\begin{equation}\label{EQ:nu Rec}
\nu(\bx) =-\frac{1}{a^2} F_5(\bx)\,.
\end{equation}
The reconstruction procedure is summarized in~\Cref{ALGO:Phase Rec}.
\begin{algorithm}[!htb]
\caption{Reconstructing Phase $\varphi$}
\label{ALGO:Phase Rec}
\, Input: $a:=|f|$, $\{\beta_m, |u_m|\}_{m=1}^{N_m}$
\begin{algorithmic}[1]
  \State Reconstruct $\omega$ from ~\eqref{EQ:omega Rec}
  \State Solve~\eqref{EQ:B System} for $\{F_1, \cdots, F_5\}$
  \State Reconstruct $\nu$ following~\eqref{EQ:nu Rec}
  \State Evaluate $\varphi:=\arctan\frac{\nu}{\omega}$
\end{algorithmic}
\end{algorithm}

\RED{In the uniqueness theory, we need intensity data from at least $N_m\ge d+2$ well-selected (to make the $\cK$ matrix invertible) $\kappa$ media to reconstruct the phase gradient $\nabla \varphi$ and data from at least $N_m\ge d+4$ well-selected (to ensure the invertibility of the matrix $\cB$) $\beta$ media to uniquely reconstruct the phase $\varphi$. Besides uniqueness consideration, one also needs to select media combinations such that the matrices $\cK$ and $\cB$ have good condition numbers. Therefore, it is better to obtain data from more media, say $N_m \gg d+4$, for better stability. In such cases, the linear systems~\eqref{EQ:K System} and~\eqref{EQ:B System} can be solved in the least-squares manner.}

Let us finish this section with the following important observations.
\begin{remark}
    The uniqueness proofs in~\Cref{SEC:LinearMedia} and~\Cref{SEC:NonlinearMedia}, as well as the reconstruction method derived from them, are essentially based on the so-called transport of intensity equations~\cite{ZuLiSuFaZhLuZhWaHuCh-OLE20}; see~\eqref{EQ:Ampl-Real-Imag} and~\eqref{EQ:Ampl-Real-Imag QNLS}. Standard ways of using this technique require boundary conditions for the phase, which is often impractical. Media diversity allows us to avoid such a requirement in our method. This is an essential difference between what we have here and what is in the literature.
\end{remark}

\section{Numerical experiments}
\label{SEC:Numer}

We now show some numerical experiments to demonstrate the performance of the analytic reconstruction procedure we outlined in the previous section.
\begin{figure}[!htb]
\centering
\begin{subfigure}{\textwidth}
\centering
\begin{subfigure}{0.33\textwidth}
    \includegraphics[width=\textwidth]{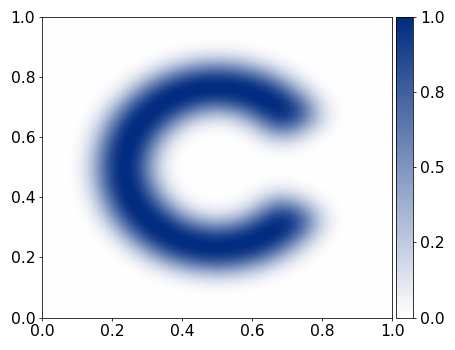}
    \caption{True phase}
\end{subfigure}
\hfill
\begin{subfigure}{0.66\textwidth}
    \includegraphics[width=0.49\textwidth]{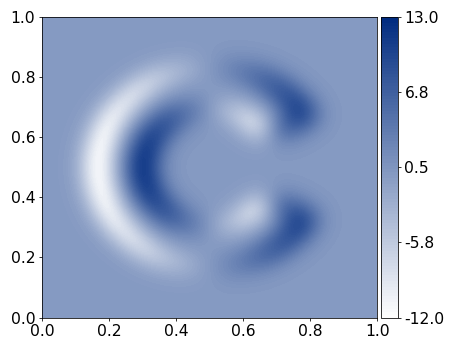}
    \includegraphics[width=0.49\textwidth]{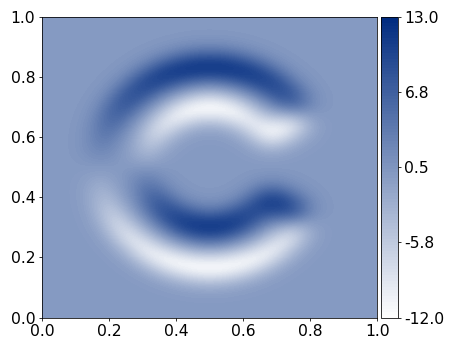}
    \caption{True $\omega \nabla \nu - \nu \nabla \omega$ ($x$ and $y$ components)}
\end{subfigure}
\end{subfigure}
\begin{subfigure}{\textwidth}
\begin{subfigure}{0.33\textwidth}
\hfill
\end{subfigure}
\begin{subfigure}{0.66\textwidth}
    \includegraphics[width=0.49\textwidth]{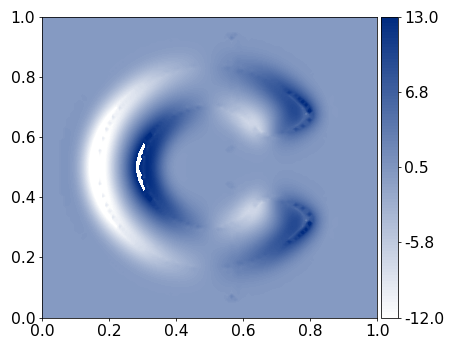}            \includegraphics[width=0.49\textwidth]{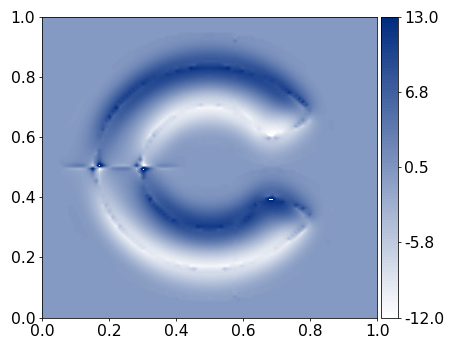}
    \caption{Reconstructed $\omega \nabla \nu - \nu \nabla \omega$ ($x$ and $y$ components) with $\kappa_1$ to $\kappa_4$}
\end{subfigure}
\end{subfigure}
\begin{subfigure}{0.33\textwidth}
\hfill
\end{subfigure}
\begin{subfigure}{0.66\textwidth}
    \includegraphics[width=0.49\textwidth]{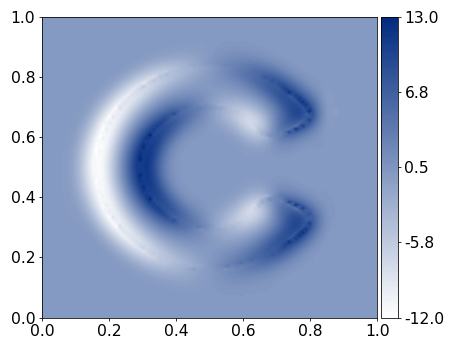}            \includegraphics[width=0.49\textwidth]{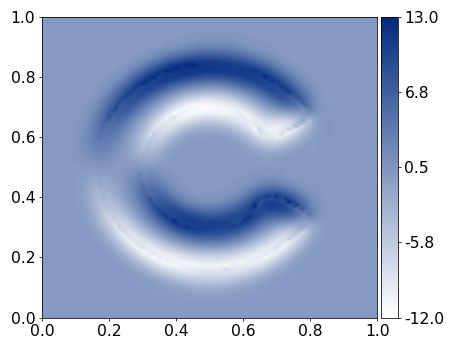}
    \caption{Reconstructed $\omega \nabla \nu - \nu \nabla \omega$ ($x$ and $y$ components) with $\kappa_1$ to $\kappa_5$}
\end{subfigure}
\caption{Phase gradient reconstruction in Experiment I by~\Cref{ALGO:Phase Grad Rec}.}
\label{FIG:Exp-1}
\end{figure}

\paragraph{Experiment I.} In the first numerical experiment, we consider phase gradient reconstructions using data from media with different $\kappa$ profiles. These profiles are defined, with $\bx=(x, y)$ as:
\[
\begin{array}{lll}
    \kappa_1(\bx)=1 + \sin(2\pi x)/2, &\kappa_2(\bx)=1 + \sin(2\pi y)/2,
    &\kappa_3(\bx)=1 + \cos(2\pi x)/2,\\[1ex]
    \kappa_4(\bx)=1 + \cos(2\pi x^2)/2,
    &\kappa_5(\bx)=1 + \cos(2\pi y)/2\,.
    & 
\end{array}
\]
These $\kappa$ profiles are selected to ensure the matrix $\cK$ in~\eqref{EQ:K} is full-rank when a sufficiently large number of $\kappa$ is used, as $\cK$ being invertible is a necessary condition for the reconstruction process. Two typical reconstructions are provided in~\Cref{FIG:Exp-1}. Shown are the true phase gradient (top row), the reconstruction with data from the first four media $\kappa_1\sim \kappa_4$ (middle row), and the reconstruction with data from all five media $\kappa_1\sim \kappa_5$.

The mild artifacts observed in~\Cref{FIG:Exp-1} (b) are due to the instability in solving the linear system \eqref{EQ:K System}. We can mitigate this instability by increasing the number of media, which makes the linear system over-determined. As demonstrated in~\Cref{FIG:Exp-1} (c), incorporating $\kappa_5$ into the media improves the quality of the reconstructions. We conducted several additional simulations. The quality of the reconstructions shown in~\Cref{FIG:Exp-1} is representative of our results.

In our example and in practice, it is inevitable that $\partial |u|/\partial z |_{z=0}$ is zero at some $\bx \in \Omega$, which creates a problem in the last step of \Cref{ALGO:Phase Grad Rec}. There are several workarounds: we can add a small $\epsilon$ to the denominator, or use a threshold approach where any denominator value below a certain small threshold is replaced by that threshold value. The approach we used in our reconstruction in \Cref{FIG:Exp-1} involves averaging the division in the local neighborhood. Specifically, when the denominator is zero, we compute the quotient by averaging the values of the division using the neighboring non-zero denominators. This method is reasonable when $\partial |u|/\partial z|_{z=0}$ is zero only in a negligibly small part of the domain.

\paragraph{Experiment II.} In the second numerical experiment, we show the reconstructions of the phase with data collected from media with nonlinearity controlled by $\beta$, following the study of~\Cref{SEC:NonlinearMedia}. The reconstructions are performed with~\Cref{ALGO:Phase Rec}. We considered data collected from seven different media whose $\beta$ profiles are given by
\begin{align*}
\beta_1(\bx) &= 1,              & \beta_2(\bx) &= 1 + \sin(2\pi x)/2,  & \beta_3(\bx) &= 1 + \cos(2\pi x)/2,   \\
\beta_4(\bx) &= 1 + \cos(2\pi y)/2, &\beta_5(\bx) &= 1 + \sin(2\pi x^2)/2, & \beta_6(\bx) &= 1 + \cos(2\pi y^2)/2, \\
 \beta_7(\bx) &= 1 + \sin(2\pi y)/2\,.   &              &
\end{align*}

Typical reconstruction results are presented in~\Cref{FIG:Exp6}. Shown from left to right are respectively the true phase profile, the phase profile reconstructed using data collected from media $\beta_1\sim \beta_6$, and the phase profile reconstructed using data collected from media $\beta_1\sim \beta_7$.~\Cref{THM:Unique} requires data from at least $6$ linearly independent $\beta$ profiles for the uniqueness of the reconstruction. This is the case of the first reconstruction shown in~\Cref{FIG:Exp6} (b). When we add data from an additional medium $\beta_7$, we only see a slight improvement in the reconstruction (see~\Cref{FIG:Exp6} (c)).

We have performed similar simulations on different phase profiles. The quality of the reconstructions is at the same level as shown in~\Cref{FIG:Exp6}. Adding data from more media does not seem to improve the reconstruction.
\begin{figure}[H]
    \centering
    \begin{subfigure}{0.30\textwidth}
        \includegraphics[width=\textwidth]{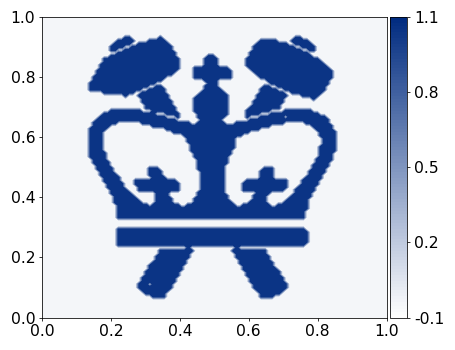}
        \caption{}
    \end{subfigure}
    \hfill
     \begin{subfigure}{0.30\textwidth}
        \includegraphics[width=\textwidth]{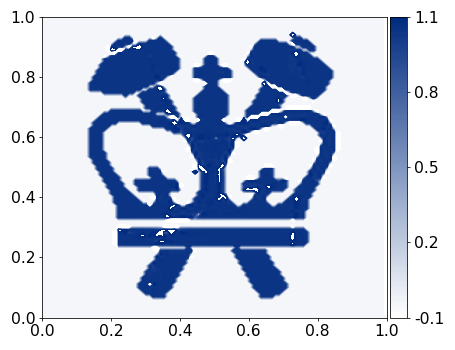}
        \caption{}
    \end{subfigure}
    \hfill
     \begin{subfigure}{0.30\textwidth}
        \includegraphics[width=\textwidth]{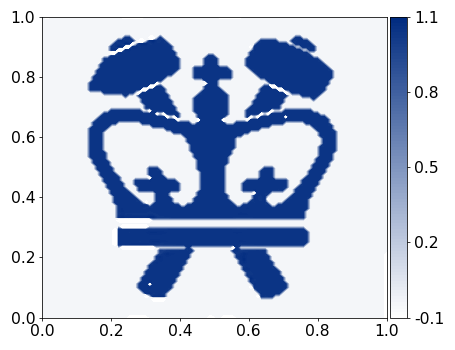}
        \caption{}
    \end{subfigure}    
    \caption{Phase reconstruction in Experiment II by \Cref{ALGO:Phase Rec}. Shown from left to right are (a) true phase, (b) phase reconstructed with data from $\beta_1\sim \beta_6$, and (c) phase reconstructed with data from $\beta_1\sim \beta_7$.}
    \label{FIG:Exp6}
\end{figure}

\RED{
There are several important aspects of the numerical implementation of the phase reconstruction algorithm.
}

\RED{First, the quality of the numerical reconstruction of the phase of the incoming wave remains on the same level for different intensity profiles, as long as $|f|\ge \eps>0$ for some $\eps$ much larger than the machine accuracy. For instance, we can have almost identical phase reconstructions as the ones 
 in~\Cref{FIG:Exp6} using either $|f(\bx)|=1$ or $|f(\bx)|=\exp(-\frac{(x-0.5)^2+(y-0.5)^2}{0.5})$.}

\RED{Second, instead of directly reconstructing the phase, the method we have reconstructs the (rescaled) real and imaginary parts of $f$, that is, $\nu$ and $\omega$ in the representation $f=a (\nu+i\omega)$ (where $a=|f|$). Our method does not explicitly enforce the constraint $\nu^2+\omega^2=1$ in the process. However, we observe that this constraint is satisfied very well for the successful reconstructions we performed. The reconstructions usually give $\nu^2+\omega^2=1$ up to a few digits of accuracy.
}

\RED{Third, the reconstruction methods we proposed are local, meaning that we need to invert matrices $\cK$ and $\cB$ at each point of the computational domain. These are small matrices whose inversion is computationally cheap to obtain. As we mentioned before, in our implementations, we take the simplest sets of media parameters $\kappa$ and $\beta$ to ensure that the matrices $\cK$ and $\cB$ are invertible at every point in the computational domain.} 

\section{Computational retrieval in general settings}
\label{SEC:Generalizations}

We now turn our attention to numerical studies, specifically focusing on the phase retrieval problem in contexts that extend beyond the discussions in the previous sections. Given measured the intensity data at a sequence of $N_p$ transverse planes located at $\{z_p\}_{p=1}^{N_p}$, for a group of $N_m$ different media $\{\kappa_m, \beta_m\}_{m=1}^{N_m}$, we search for the incident field $f(\bx)$ as the minimizer of the following least-squares data mismatch functional:
\begin{equation}\label{EQ:Min Func}
\Phi(f):= \frac{1}{2} \sum_{m=1}^{N_m} \sum_{p=1}^{N_p} \int_\Omega \big(|u_{m,p}|^2-d_{m,p}^2\big)^2 d\bx
\end{equation}
where $u_{m,p}(\bx)=u_m(z_p,\bx)$ is the wave field solution corresponding to medium $m$ measured at the transverse plane $z_p$, and $d_{m,p}$ is the corresponding measured intensity data. In the following experiments, $L_z=0.02$ and the multi-plane data are measured at $10$ transverse planes located at $z_p = 0.01 + 0.001p$, $p=0,\dots,9$.

The computations in this section are all performed in the spatial domain $\Omega=(0,L_x)\times (0,L_y)$. We impose periodic boundary conditions on the solutions to the Schr\"{o}dinger equations~\eqref{EQ:LS},~\eqref{EQ:QNLS}, and~\eqref{EQ:GP}. The optimization problem is solved iteratively through a quasi-Newton optimization method, where the gradient, or the Fr\'echet derivative, is computed via the adjoint state method. The relevant calculations and the implementation details are documented in~\Cref{SEC:Implementation}. 

\RED{Let us emphasize again that while our main interest is the phase of the incident wave, our reconstruction algorithm attempts to reconstruct the incident wave $f$ itself. In other words, if $f(\bx) = a(\bx)\exp(i\varphi(\bx))$, our algorithm reconstruct both the amplitude $|f|$ and the phase $\varphi$. If we also measure the amplitude of $f$, $a(\bx)$, we can add $|f|=a$ as a constraint in the optimization process. In such a case, it would be more convenient to view $\Phi$ as a functional $\Psi$ of $\varphi$, i.e., 
\begin{equation}\label{EQ:Psi}
    \Psi(\varphi) = \Phi(a\exp(i\varphi)).    
\end{equation}
The Fr\'echet derivative of $\Psi$ can be computed using the Fr\'echet derivative of $\Psi$ along with the chain rule; more details can be found in~\Cref{SEC:Implementation}.}

Since our numerical experiments are also performed using synthetic data, we always know the true phase $\varphi_{\rm true}$. To quantify the error in the reconstruction $\varphi_{\rm reconstructed}$, we define its mean shift from the truth by
\begin{equation}\label{EQ:shift}
    c^*=\argmin_{c \in \bbR}\|\varphi_{\text{reconstructed}}+c-\varphi_{\text{true}}\|_{L^2(\Omega)}^2\,.
\end{equation}
We then define the relative error of the reconstruction, after the correction of the phase shift, as
\begin{equation}\label{EQ:rel_err}
    \cE= \dfrac{\|\varphi_{\text{reconstructed}}-\varphi_{\text{true}} - c^*\|_{L^2(\Omega)}}{\|\varphi_{\text{true}}\|_{L^2(\Omega)}}\,.
\end{equation}

\subsection{Single medium data}

We start with numerical reconstructions using data from only a single medium. We consider both the case of single-plane data and the case of multi-plane data. In the multi-plane data case, we use data collected in planes located at $\{z_p\}_{p=1}^{N_p}$. In the single-plane data case, we use only data from the last plane located at $z_{N_p}$.
\begin{figure}[!htp]
    \centering
    \begin{subfigure}{0.24\textwidth}
        \includegraphics[width=\textwidth]{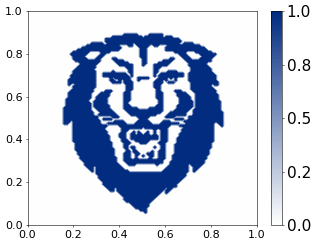}
        \captionsetup{font=footnotesize}
        \caption{}
    \end{subfigure}
     \begin{subfigure}{0.24\textwidth}
        \includegraphics[width=\textwidth]{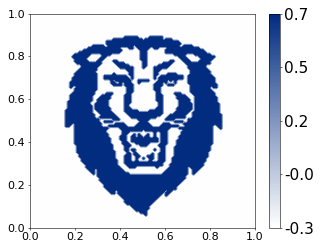}
        \captionsetup{font=footnotesize}
        \caption{}
    \end{subfigure}
     \begin{subfigure}{0.24\textwidth}
        \includegraphics[width=\textwidth]{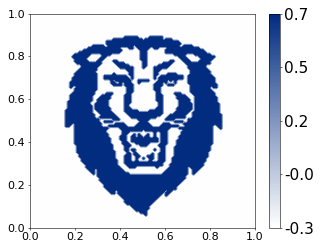}
        \captionsetup{font=footnotesize}
        \caption{}
    \end{subfigure}
     \begin{subfigure}{0.24\textwidth}
        \includegraphics[width=\textwidth]{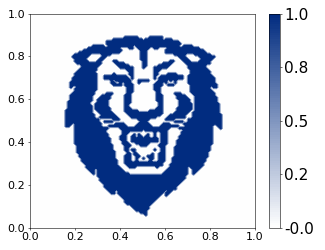}
        \captionsetup{font=footnotesize}
        \caption{}
    \end{subfigure}
    \begin{subfigure}{0.24\textwidth}
    \hfill
    \end{subfigure}
     \begin{subfigure}{0.24\textwidth}
        \includegraphics[width=\textwidth]{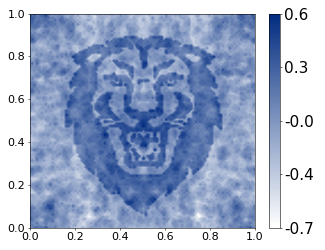}
        \captionsetup{font=footnotesize}
        \caption{}
    \end{subfigure}
     \begin{subfigure}{0.24\textwidth}
        \includegraphics[width=\textwidth]{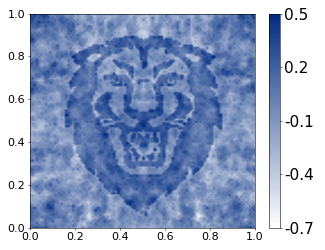}
        \captionsetup{font=footnotesize}
        \caption{}
    \end{subfigure}
     \begin{subfigure}{0.24\textwidth}
        \includegraphics[width=\textwidth]{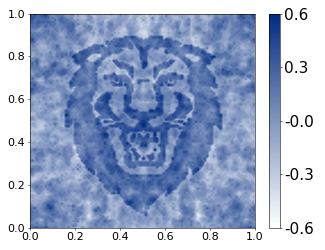}
        \captionsetup{font=footnotesize}
        \caption{}
    \end{subfigure} 
    \caption{Phase reconstructions with multi-plane (top row) and single-plane (bottom row) data from a single medium. Left to right: (a) true phase, (b, e) phase reconstructed with the LSE model~\eqref{EQ:LS}, (c, f) phase reconstructed with the GP model~\eqref{EQ:GP}, and (d, g) phase reconstructed with the QNLS model~\eqref{EQ:QNLS}. 
    }
    \label{FIG: Single Media}
\end{figure}

A typical set of reconstructions for the linear Schr\"{o}dinger model~\eqref{EQ:LS}, the quadratic nonlinear Schr\"{o}dinger model~\eqref{EQ:QNLS}, and the Gross-Pitaevskii model~\eqref{EQ:GP} is presented in~\Cref{FIG: Single Media}, and the quality of the reconstructions, measured in the mean phase shift $c^*$ and the relative error $\cE$ are summarized in~\Cref{TAB:Error Single-Medium}. 

There are two obvious features that we can observe from the reconstructions. The first is that reconstructions from multi-plane data are significantly more accurate than those from only single-plane data. This is true for all models. Relative errors are negligible for reconstructions with multi-plane data but on the order of $6\times 10^{-1}$ for single-plane data reconstructions.
\begin{table}[!htp]
    \centering
    \renewcommand{\arraystretch}{1.25} 
    \setlength\tabcolsep{7pt}       
    \begin{tabular}{|c|c|c|c|c|}
    \hline\hline
    {\bf Model} & {\bf Media} & {\bf Data} & {\bf Phase Shift $c^*$} & {\bf Relative Error $\cE$} \\
    \hline\hline
    \multirow{2}{*}{LS ~\eqref{EQ:LS}} & \multirow{2}{*}{$\kappa = 1$} & 
    multi-plane & -0.278 & 7.85e-09 \\
    \cline{3-5}
    & & single-plane & -0.283 & 6.13e-01 \\
    \hline\hline
    \multirow{2}{*}{GP~\eqref{EQ:GP}} & \multirow{2}{*}{$\begin{matrix}
        \kappa=1 \\
        \beta= 11 + 10\cos(10 \pi x)
    \end{matrix}$} & 
    multi-plane & -0.283 & 8.96e-09 \\
    \cline{3-5}
    & & single-plane & -0.305 & 6.91e-01 \\
    \hline\hline
    \multirow{2}{*}{QNLS~\eqref{EQ:QNLS}} & \multirow{2}{*}{$\begin{matrix}
        \kappa=1 \\
        \beta= 11 + 10\cos(10 \pi x)
    \end{matrix}$} & 
    multi-plane & -3.74e-10 & 2.27e-08 \\
    \cline{3-5}
    & & single-plane & -0.27 & 6.76e-01 \\
    \hline\hline
    \end{tabular}
    \caption{Phase shifts and relative errors for reconstructed phases from~\Cref{FIG: Single Media}.
    }
    \label{TAB:Error Single-Medium}
\end{table}

The second obvious observation is that a significant phase shift is observed in all reconstructions besides the multi-plane data reconstruction with the quadratic nonlinear Schr\"{o}dinger model~\eqref{EQ:QNLS}. While the observed constant shift is approximately $c^*=-0.28$ for the cases shown here, this specific value of $c^*$ holds no unique significance. Our extra experiments indicate that altering the initial guess of the optimization algorithm results in varied phase shifts in the reconstructions.

The fact that reconstructions with the quadratic nonlinear Schr\"{o}dinger model~\eqref{EQ:QNLS} using data from only a single medium has no phase ambiguity does not contradict the examples of nonuniqueness that we constructed in~\Cref{SEC:Nonuniqueness}, as those constructions only work for specific incident waves that make the solutions independent of $z$. The numerical reconstructions here (and the many numerical experiments we have done but did not present here) seem to suggest that it is generally possible to uniquely reconstruct the phase of the incident wave from multi-plane intensity measurement for the quadratic Schr\"{o}dinger model~\eqref{EQ:QNLS}.

\subsection{Multi-media data}

In the next group of numerical simulations, we study phase reconstructions using data from multiple media. We again consider both single-plane and multi-plane data.
\begin{figure}[!htb]
    \centering 
     \begin{subfigure}{0.30\textwidth}        
        \includegraphics[width=\textwidth]{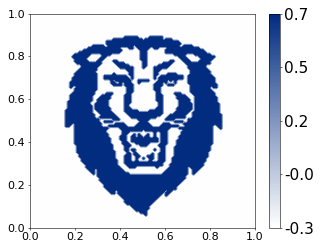}
        \captionsetup{font=footnotesize}
        \caption{}
    \end{subfigure}
     \begin{subfigure}{0.30\textwidth}
        \includegraphics[width=\textwidth]{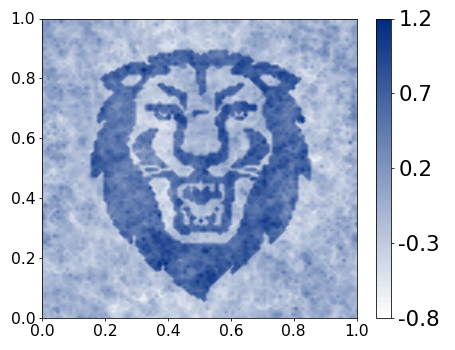}
        \captionsetup{font=footnotesize}
        \caption{}
    \end{subfigure}

     \begin{subfigure}{0.30\textwidth}
        \includegraphics[width=\textwidth]{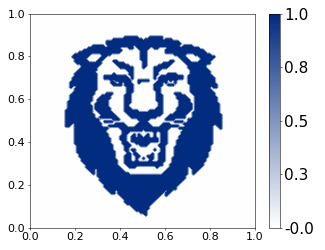}
        \captionsetup{font=footnotesize}
        \caption{}
    \end{subfigure}
     \begin{subfigure}{0.30\textwidth}
        \includegraphics[width=\textwidth]{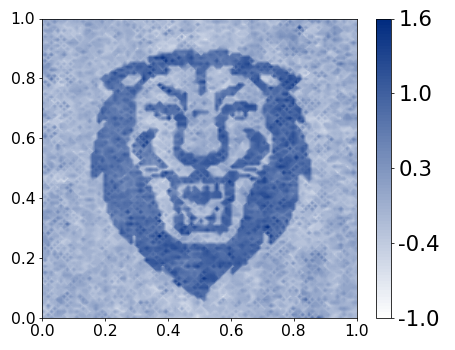}
        \captionsetup{font=footnotesize}
        \caption{}
    \end{subfigure}   
    \caption{Phase reconstructions for the GP model~\eqref{EQ:GP} (top) and QNLS model~\eqref{EQ:QNLS} (bottom) using data from two different media $(\kappa_1,\beta_1)$ and $(\kappa_2,\beta_2)$ in~\eqref{EQ:Kappa-Beta}. Shown are reconstructions with multi-plane (left) and single-plane (right) data.}
    \label{FIG: 2 beta kappa}
\end{figure}

Typical reconstructions are presented in~\Cref{FIG: 2 beta kappa} and~\Cref{FIG: 3 beta kappa}. The quality of the reconstruction is summarized in~\Cref{TAB:Error-Multi-Media}. The media profiles we used are
\begin{equation}\label{EQ:Kappa-Beta}
\begin{array}{rcl}
(\kappa_1,\beta_1) &=& (1, 11 + 10 \cos(10 \pi x)),\\
(\kappa_2,\beta_2) &=& (1 + 0.5 \cos(5 \pi x), 11 + 10 \cos (10 \pi y)),\\
(\kappa_3,\beta_3) &=& (1 + 0.5 \cos(5 \pi y), 11 + 10 \sin(10 \pi x))\,.
\end{array}
\end{equation}
In the reconstructions of~\Cref{FIG: 2 beta kappa}, we used the two media $(\kappa_1, \beta_1)$ and $(\kappa_2, \beta_2)$ while in the results of~\Cref{FIG: 3 beta kappa}, we used all three media of~\eqref{EQ:Kappa-Beta}. The initial guess for the phase for the optimization algorithm is $\phi_0=0$. Neither the particular media profile nor the initial guess makes a substantial difference in the reconstruction results.
\begin{figure}[!htb]
    \centering 
     \begin{subfigure}{0.30\textwidth}
        \includegraphics[width=\textwidth]{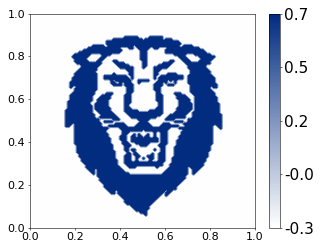}
        \captionsetup{font=footnotesize}
        \caption{}
    \end{subfigure}
     \begin{subfigure}{0.30\textwidth}
        \includegraphics[width=\textwidth]{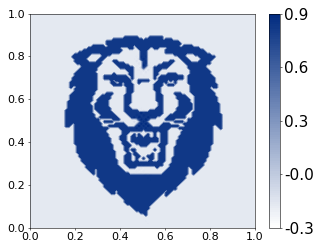}
        \captionsetup{font=footnotesize}
        \caption{}
    \end{subfigure}

     \begin{subfigure}{0.30\textwidth}
        \includegraphics[width=\textwidth]{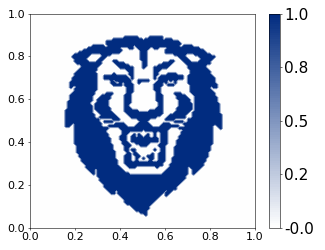}
        \captionsetup{font=footnotesize}
        \caption{}
    \end{subfigure}
     \begin{subfigure}{0.30\textwidth}
        \includegraphics[width=\textwidth]{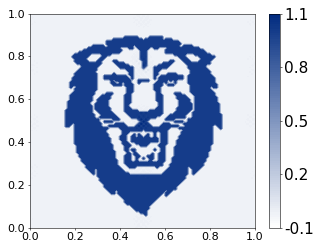}
        \captionsetup{font=footnotesize}
        \caption{}
    \end{subfigure}
    
    \caption{Same as~\Cref{FIG: 2 beta kappa} but with data from three media $(\kappa_1,\beta_1)$, $(\kappa_2,\beta_2)$, and $(\kappa_3,\beta_3)$ in~\eqref{EQ:Kappa-Beta}.}
    \label{FIG: 3 beta kappa}
\end{figure}

The key observations from this group of numerical simulations are as follows. First, using multiple media data does not help us eliminate the constant phase shift in the reconstructions with the Gross-Pitaevskii model~\eqref{EQ:GP}, which is expected from the $u\mapsto ue^{i\varphi}$ invariance of the solution to the equation. Second, using data from multiple media can significantly improve the phase reconstructions (which is very clear by a visual comparison between~\Cref{FIG: 2 beta kappa} and~\Cref {FIG: 3 beta kappa}). Third, when more and more media are used, even data at a single transverse plane seems enough to remove phase ambiguity for the quadratic nonlinear Schr\"{o}dinger model.
\begin{table}[!htb]
    \centering
    \renewcommand{\arraystretch}{1.25} 
    \setlength\tabcolsep{7pt}       
    \begin{tabular}{|c|c|c|c|c|}
    \hline\hline
    {\bf Model} & {\bf Media} & {\bf Data} & {\bf Phase Shift $c^*$} & {\bf Relative Error $\cE$} \\
    \hline\hline
    \multirow{2}{*}{GP~\eqref{EQ:GP}} & \multirow{4}{*}{
    $\{\kappa_m, \beta_m)\}_{m=1}^2$ in~\eqref{EQ:Kappa-Beta}
    } &  
    multi-plane & -0.267 & 4.27e-10 \\
    \cline{3-5}
    & & single-plane & -0.229 & 3.95e-01 \\
    \cline{1-1}\cline{3-5}
    \multirow{2}{*}{QNLS~\eqref{EQ:QNLS}} & & 
    multi-plane & 3.26e-10 & 4.04e-10 \\
    \cline{3-5}
    & & single-plane & -0.147 & 4.05e-01 \\
    \cline{1-5}
    \multirow{2}{*}{GP~\eqref{EQ:GP}} & \multirow{4}{*}{
    $\{\kappa_m, \beta_m)\}_{m=1}^3$ in~\eqref{EQ:Kappa-Beta}
    } & 
    multi-plane & -0.259 & 6.17e-11 \\
    \cline{3-5}
    & & single-plane & -0.246 & 4.14e-04 \\
    \cline{1-1}\cline{3-5}
    \multirow{2}{*}{QNLS~\eqref{EQ:QNLS}} & & 
    multi-plane & -9.77e-11 & 2.76e-10 \\
    \cline{3-5}
    & & single-plane & 6.27e-05 & 7.08e-04 \\
    \cline{1-5}
    \hline\hline
    \end{tabular}
    \caption{Phase shifts and relative errors for reconstructed phases in~\Cref{FIG: 2 beta kappa} and~\Cref{FIG: 3 beta kappa}. 
    }
    \label{TAB:Error-Multi-Media}
\end{table}

To further investigate whether or not multiple media data at a single transverse plane is sufficient to uniquely reconstruct the phase, we perform another set of numerical experiments where we use data from seven sets of media given by
\[
(\kappa_m, \beta_m) = (1 + 0.5 \sin(4 \pi m y),  11+10 \cos (6\pi (1+m) x)), \quad m=0,1,\cdots, 6\,.
\]
We started the optimization algorithm from different initial guesses of the phase. The algorithm always converges to the same solution with no phase shift (up to the numerical error). Three of the reconstructions are shown in~\Cref{FIG: terminal}. This clearly demonstrates that a single plane measurement from a sufficiently large number of media is enough to uniquely determine the phase of the incident wave. A mathematical theory for this observation is currently missing.
\begin{figure}[!htb]
    \centering
    \begin{subfigure}{0.30\textwidth}
        \includegraphics[width=\textwidth]{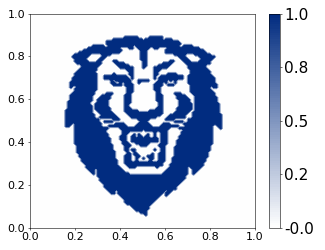}
        \captionsetup{font=footnotesize}
    \end{subfigure}
    \hfill
     \begin{subfigure}{0.30\textwidth}
        \includegraphics[width=\textwidth]{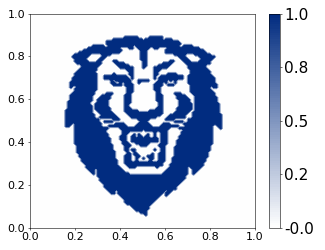}
        \captionsetup{font=footnotesize}
    \end{subfigure}
    \hfill
     \begin{subfigure}{0.30\textwidth}
        \includegraphics[width=\textwidth]{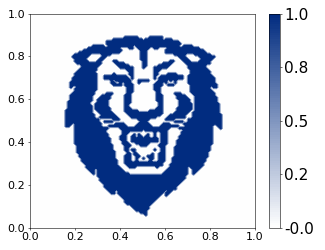}
        \captionsetup{font=footnotesize}
    \end{subfigure}

    \caption{Phase reconstructions for the quadratic nonlinear Schr\"{o}dinger model~\eqref{EQ:QNLS} with single-plane, multiple media data. Shown are reconstructions by an optimization algorithm with three different randomly selected initial guesses.}
    \label{FIG: terminal}
\end{figure}

\subsection{Impact of measurement noise}

In the last set of numerical experiments, we examine the stability of the phase retrieval problem with respect to random noise in the intensity data. We generated noisy data by polluting the true data with multiplicative Gaussian noise whose strength is controlled by a parameter $\alpha$. More precisely, noisy data $\wt d$ are generated from $d$ using the relation
\[
\wt d(\bx) = (1 + \alpha \mathcal{N}(0,1))\, d(\bx)
\]
$\cN(0, 1)$ denotes a Gaussian random field with mean $0$ and variance $1$.

Phase reconstructions from typical realizations of the noisy data are shown in~\Cref{FIG: noise T1T2} for the quadratic nonlinear Schr\"{o}dinger model with data from two different media. While in general we do observe that the reconstructions are significantly affected by the noise in the data, it is also clear that the reconstructions are fairly stable. Even in the absence of regularization, the reconstruction algorithm converges to decent results. Extensive numerical simulations with different media profiles show the same characteristics observed in~\Cref{FIG: noise T1T2}.
\begin{figure}[!htb]
    \centering
    \begin{subfigure}{0.30\textwidth}
        \includegraphics[width=\textwidth]{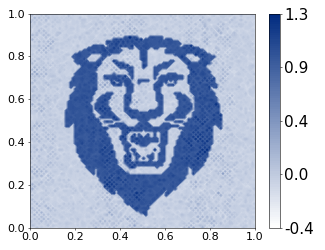}
        \caption{$\alpha = 0.1$}
    \end{subfigure}
    \hfill
    \begin{subfigure}{0.30\textwidth}
        \includegraphics[width=\textwidth]{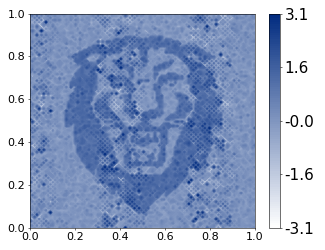}
        \caption{$\alpha = 0.5$}
    \end{subfigure}
    \hfill
    \begin{subfigure}{0.30\textwidth}
        \includegraphics[width=\textwidth]{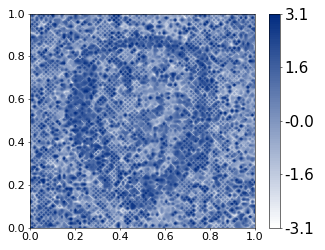}
        \caption{$\alpha = 1$}
    \end{subfigure}
    \begin{subfigure}{0.30\textwidth}
        \includegraphics[width=\textwidth]{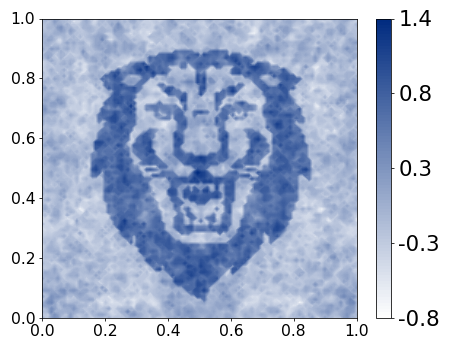}
        \caption{$\alpha = 0.001$}
    \end{subfigure}
    \hfill
    \begin{subfigure}{0.30\textwidth}
        \includegraphics[width=\textwidth]{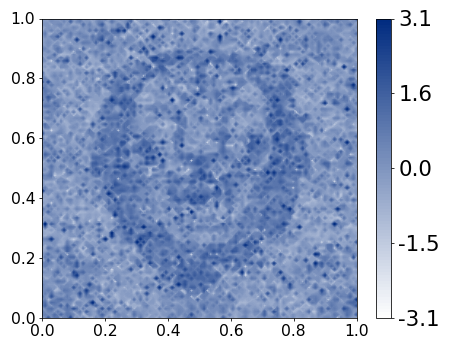}
        \caption{$\alpha = 0.01$}
    \end{subfigure}
    \hfill
    \begin{subfigure}{0.30\textwidth}
        \includegraphics[width=\textwidth]{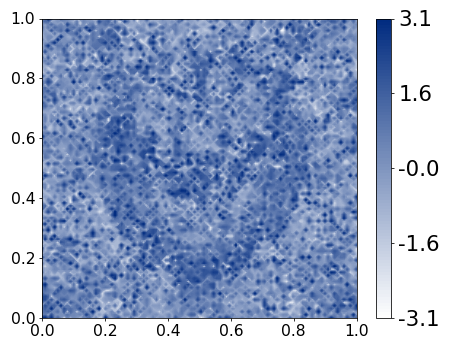}
        \caption{$\alpha = 0.02$}
    \end{subfigure}

    \caption{Typical phase reconstructions with the QNLS model~\eqref{EQ:QNLS} using noisy multi-plane (top row) and single-plane (bottom row) data. 
    }
\label{FIG: noise T1T2}
\end{figure}

In general, having multi-plane data always helps average out noise in the data so that the effective noise strength is lower than that presented by the same $\alpha$ in the single-plane case. This is why the multi-plane data reconstructions are much better than their single-plane data counterparts, as shown in~\Cref{FIG: noise T1T2}.

\section{Concluding remarks}
\label{SEC:Concl}

This work studies phase retrieval for wave fields using linear and nonlinear media diversity, aiming to recover the phase of an incoming wave from intensity measurements behind multiple media. While such a phase reconstruction is generally impossible, we show that it can be uniquely achieved when we have a well-selected group of media. Moreover, when this is achievable, direct (i.e. non-iterative) methods can be developed for the reconstruction of the phase from measured intensity data. While our derivation of the phase reconstruction method is based on the transport of intensity equation, our method does NOT require boundary conditions for the phase, making it dramatically different from similar methods in the literature that are based on the transport of intensity equations.

We also demonstrate numerically that phase retrieval is feasible from multi-plane and multi-media intensity data outside of the regime where our analytic derivations are possible.

There are many theoretical and computational issues to be solved in the future. One particularly interesting question is whether multi-plane data is enough for the unique recovery of the phase with a certain equivalence class. As we have shown with examples, for a given medium, one can always find an incident wave whose phase is not uniquely recoverable. However, our numerical simulations show good phase reconstruction results for other incident waves. A better mathematical understanding of this issue is needed. Another interesting question to ask is whether or not a single medium governed by the quadratic nonlinear Schr\"{o}dinger model~\eqref{EQ:QNLS} is enough to achieve uniqueness (modulo the equivalence class of incident waves that we constructed as counterexamples in~\Cref{SEC:Nonuniqueness}). From the practical point of view, there is still a lot to learn on how to maximize phase information in intensity data by tuning simultaneously $\kappa$ and $\beta$ of the underlying media.

\section*{Acknowledgment}

This work is partially supported by the National Science Foundation through grants DMS-1937254 and DMS-2309802. The computational experiments conducted in this work are made possible with the support of the Research Computing Services (RCS) at Columbia University.

\appendix 

\section{Examples of non-uniqueness without media diversity}
\label{SEC:Nonuniqueness}

Here we construct some examples where one cannot uniquely reconstruct the phase of the incident wave from data measured in one nonlinear medium only, even if the measurements are taken at multiple transverse planes. While most of these examples are for special situations, they show that we need to be careful about what we need to measure to ensure the success of phase retrieval.

\subsection{Non-uniqueness for the quadratic NLS}

We start with a one-dimensional example of a particular $\beta$ for which non-uniqueness occurs for the quadratic nonlinear Schr\"odinger model~\eqref{EQ:QNLS}.

\begin{proposition}\label{Prop:Nonunique 1D QNLS}
Let $\Omega=\bbR$, $\kappa(x)=1$, and $\beta(x) = 4\sech(2x) = \frac{8}{e^{2x} + e^{-2x}}$. Then there exist two distinct incident waves $f_1(x)$ and $f_2(x)$ such that the corresponding solutions $u_1(z,x)$ and $u_2(z,x)$ of~\eqref{EQ:QNLS} satisfy $|u_1(z,x)| = |u_2(z,x)|$ for all $(z,x)\in (0, \infty)\times \Omega$.
\end{proposition}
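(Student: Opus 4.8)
The plan is to exploit the fact that a $z$-independent (stationary) solution automatically carries $z$-independent intensity, and then to use a conjugation symmetry to manufacture two such solutions with identical modulus. First I would observe that if $f\in\cF$ solves the stationary equation $f'' + 4\sech(2x)f^2 = 0$ on $\bbR$, then $u(z,x):=f(x)$ solves~\eqref{EQ:QNLS} with $\kappa\equiv 1$ and incident wave $f$, since $\partial_z u = 0$ and the spatial equation holds; its intensity $|u(z,x)|=|f(x)|$ is then independent of $z$. Hence it suffices to produce two \emph{distinct} stationary solutions $f_1\neq f_2$ with $|f_1(x)|=|f_2(x)|$ for all $x$. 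Note that a second \emph{real} solution with the same modulus is impossible (a $\cC^2$ function sharing the modulus of a nonvanishing real solution must equal $\pm$ it, and $-f$ solves the equation only trivially), so the two waves must differ in a genuinely complex way.

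The key structural observation is that, because $\kappa$ and $\beta$ are real, complex conjugation maps stationary solutions to stationary solutions: conjugating $f'' + 4\sech(2x)f^2 = 0$ gives $\overline{f}{}'' + 4\sech(2x)\overline{f}{}^2 = 0$. Since $|\overline{f}|=|f|$ pointwise, it is enough to exhibit a single non-real stationary solution $f_1$ that is nowhere zero; then $f_2:=\overline{f_1}$ is a distinct stationary solution with the same modulus, and the $z$-independent fields $u_1=f_1$, $u_2=\overline{f_1}$ satisfy $|u_1(z,x)|=|u_2(z,x)|$ for all $(z,x)$. This reduces the whole proposition to constructing one non-real, globally defined, nonvanishing solution of the stationary ODE.

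To carry out that construction I would anchor on the explicit real solution $\phi_0(x)=-\cosh(2x)$, which one verifies directly and which is precisely the feature that makes this $\beta$ special: $\beta\phi_0\equiv -4$, so the linearization about $\phi_0$ is the constant-coefficient operator $v\mapsto v''-8v$. Writing $f=\phi_0+v$, the perturbation solves $v''-8v+4\sech(2x)v^2=0$, and I would seek a complex $v$ decaying as $x\to\pm\infty$, so that $f$ stays close to $\phi_0$ (hence nonvanishing and global). At either infinity the equation degenerates to $v''=8v$, for which $v=0$ is a hyperbolic equilibrium with two-dimensional stable and unstable manifolds in the complexified phase space $(v,v')\in\bbC^2$; a homoclinic orbit lying off the real axis supplies the desired non-real solution, and the symmetry $v\mapsto\overline{v}$ organizes such solutions into conjugate pairs. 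Alternatively one may try to write down a closed-form complex solution directly, exploiting the $\sech$ profile.

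The hard part will be exactly this last step: guaranteeing a genuinely non-real stationary solution that (i) extends to all of $\bbR$ and (ii) never vanishes. The quadratic nonlinearity permits finite-$x$ blow-up for generic data, so the decay of $\beta$ together with closeness to $\phi_0$ must be invoked to preclude it, and establishing the existence of the complex homoclinic connection (equivalently, producing an explicit complex solution) is the genuine analytic difficulty. Once such an $f_1$ is in hand, the remaining verification—that $\overline{f_1}$ is stationary, that the moduli coincide, and that $f_1\neq\overline{f_1}$—is routine.
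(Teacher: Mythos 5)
Your structural reduction is exactly the paper's: both arguments seek a $z$-independent (stationary) solution of $f'' + \beta f^2 = 0$ and pair it with its complex conjugate to get two incident waves with identical intensity (indeed, your conjugation observation is precisely the paper's Proposition~\ref{PROP:General Nonunique}). But your proof has a genuine gap at the step you yourself flag as ``the genuine analytic difficulty'': you never produce a non-real stationary solution, and that existence claim \emph{is} the entire content of the proposition --- everything else (conjugation symmetry, equality of moduli, distinctness, the remark about real solutions) is routine. The paper closes this gap by writing the solution down explicitly: with $\psi(x) = 2\arctan(\tanh x)$, one has $\cos\psi = \sech(2x)$, $\sin\psi = \tanh(2x)$, $\psi' = 2\sech(2x)$, and a direct check gives
\begin{equation*}
  \beta = \frac{|\psi'|^2}{\cos\psi} = -\frac{\psi''}{\sin\psi},
\end{equation*}
whence $f_{\pm} = e^{\pm i\psi} = \sech(2x) \pm i\tanh(2x)$ satisfy $f_{\pm}'' + \beta f_{\pm}^2 = 0$, are nowhere vanishing, non-real, and unimodular. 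This is the closed-form complex solution that your last sentence gestures at (``exploiting the $\sech$ profile'') but does not supply.

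Moreover, the specific route you propose for the missing step is doubtful, not merely incomplete. Hyperbolicity of $v=0$ for $v'' - 8v + \beta v^2 = 0$ gives stable and unstable sets, but in no way guarantees they intersect away from the origin; a homoclinic connection in a non-autonomous problem is a codimension phenomenon that must be proved, not inferred. And it is telling that the non-real stationary solutions that actually exist are \emph{not} of your proposed form: $f_+ - \phi_0 = \sech(2x) + i\tanh(2x) + \cosh(2x)$ grows like $\cosh(2x)$ at infinity, so the paper's solutions are not decaying perturbations of your real branch $\phi_0(x) = -\cosh(2x)$. Thus even granting your (correct and rather nice) observations that $\phi_0$ solves the stationary equation and that $\beta\phi_0 \equiv -4$ makes the linearization constant-coefficient, there is no evidence that the complex homoclinic you need exists, and your argument would collapse without it.
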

\begin{proof} Define the function $\psi(x) := 2\arctan(\tanh(x))$. Direct calculation shows that 
\begin{align*}
    \beta(x) = \frac{|\psi'(x)|^2}{\cos \psi(x)} = -\frac{\psi''(x)}{\sin \psi(x)}.
\end{align*}
More calculation shows that 
\begin{align*}
    u_+(z,x) := e^{i\psi(x)},\qquad  u_-(z,x) := e^{-i\psi(x)}
\end{align*}
are solutions to~\eqref{EQ:QNLS}. These solutions have the property that 
\begin{align*}
    |u_+(z,x)| = |u_-(z,x)| = 1,\qquad \forall\,(z,x)\in (0, \infty)\times \Omega,
\end{align*}
yet their initial conditions 
\begin{equation*}
    f_+(x) := u_+(0,x) = e^{i\psi(x)},\qquad f_-(x) := u_-(0,x) = e^{-i\psi(x)}
\end{equation*}
are different.
\end{proof}

\begin{proposition}
Let $\psi \in \cC^2(\overline\Omega; \bbR)$ be a function satisfying the partial differential equation
$$\Delta \psi = -(\tan \psi)|\nabla \psi|^2\,.$$
We define $\beta(z, \bx): (0, \infty)\times \Omega \to \mathbb{R}$ and $\kappa(z, \bx): (0, \infty)\times \Omega \to \mathbb{R}$ such that
$$
\frac{\beta(z,\bx)}{\kappa(z,\bx)}:= \frac{|\nabla\psi|^2}{\cos \psi} = -\frac{\Delta\psi}{\sin \psi}\,.
$$
Then, the functions $u_{\pm}: (0,\infty) \times \Omega \to \mathbb{C}$ defined by:
$$u_{\pm}(z,\bx) := e^{\pm i\psi(\mathbf{x})}$$
are solutions to~\eqref{EQ:QNLS}. Moreover, $u_+$ and $u_-$ have the same amplitude but different initial conditions.
\end{proposition}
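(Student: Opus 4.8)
The plan is to verify directly that $u_\pm = e^{\pm i\psi(\bx)}$ solve~\eqref{EQ:QNLS}, mirroring the one-dimensional computation of the preceding proposition but now retaining the full gradient and Laplacian. First I would observe that $u_\pm$ is independent of $z$, so the term $i\,\partial_z u_\pm$ vanishes identically and only the spatial balance $\kappa\Delta u_\pm + \beta u_\pm^2 = 0$ remains to be checked. Computing the spatial derivatives gives $\nabla u_\pm = \pm i(\nabla\psi)e^{\pm i\psi}$ and
\[
\Delta u_\pm = \bigl[\pm i\,\Delta\psi - |\nabla\psi|^2\bigr]e^{\pm i\psi},
\]
while $u_\pm^2 = e^{\pm 2i\psi}$.

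Next I would substitute these into $\kappa\Delta u_\pm + \beta u_\pm^2$ and divide through by the nonvanishing factor $e^{\pm i\psi}$, reducing the equation to
\[
\kappa\bigl[\pm i\,\Delta\psi - |\nabla\psi|^2\bigr] + \beta\bigl(\cos\psi \pm i\sin\psi\bigr) = 0.
\]
Separating real and imaginary parts yields two scalar conditions: the real part gives $\beta\cos\psi = \kappa|\nabla\psi|^2$, i.e. $\beta/\kappa = |\nabla\psi|^2/\cos\psi$, and the imaginary part gives $\kappa\Delta\psi + \beta\sin\psi = 0$, i.e. $\beta/\kappa = -\Delta\psi/\sin\psi$. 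Both choices of sign produce the same pair of conditions (the imaginary parts differ only by an overall sign), so $u_+$ and $u_-$ solve~\eqref{EQ:QNLS} under exactly the same constraint on the ratio $\beta/\kappa$, which is precisely the defining relation in the statement.

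The only genuine point to argue is the mutual consistency of the two expressions for $\beta/\kappa$, and this is exactly where the hypothesis on $\psi$ enters: equating $|\nabla\psi|^2/\cos\psi = -\Delta\psi/\sin\psi$ and cross-multiplying gives $|\nabla\psi|^2\sin\psi = -\Delta\psi\cos\psi$, equivalently $\Delta\psi = -(\tan\psi)|\nabla\psi|^2$, the assumed PDE. Thus the construction is well-posed wherever $\cos\psi$ and $\sin\psi$ do not vanish, and one is then free to fix, say, $\kappa\equiv 1$ and set $\beta = |\nabla\psi|^2/\cos\psi$. Finally, since $|u_\pm(z,\bx)| = |e^{\pm i\psi}| = 1$ for all $(z,\bx)$, the two solutions share the same constant amplitude, yet their initial data $f_\pm = e^{\pm i\psi}$ differ whenever $\psi$ is nonconstant. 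I expect no serious obstacle here: the argument is a direct verification, the main care being the nonvanishing of the denominators so that the ratio is well-defined, and I would note that the one-dimensional proposition is recovered as the special case $\psi(x) = 2\arctan(\tanh x)$.
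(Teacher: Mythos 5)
Your proposal is correct and follows essentially the same route as the paper: a direct verification that $u_\pm = e^{\pm i\psi}$ are $z$-independent, computing $\Delta u_\pm = [\pm i\,\Delta\psi - |\nabla\psi|^2]e^{\pm i\psi}$, and using the defining relation for $\beta/\kappa$ together with the hypothesis $\Delta\psi = -(\tan\psi)|\nabla\psi|^2$ to cancel the terms. Your reorganization into real and imaginary parts, which makes explicit that the two scalar conditions are exactly the two expressions for $\beta/\kappa$ and that their consistency is the assumed PDE, is a slightly cleaner presentation of the same computation.
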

\begin{proof}
Let $u_{\pm}(z, \bx) := e^{\pm i\psi(x)}$, then
\begin{align*}
    \Delta u_+ & = i \Delta \psi e^{i\psi} - |\nabla \psi|^2 e^{i\psi} = -(1+i\tan \psi)|\nabla \psi|^2 e^{i\psi} 
\end{align*}
\begin{align*}
    \Delta u_- & = -i \Delta \psi e^{-i\psi} - |\nabla \psi|^2 e^{-i\psi} = (i\tan \psi - 1)|\nabla \psi|^2 e^{-i\psi} 
\end{align*}
where we used $\Delta \psi = -(\tan \psi)|\nabla \psi|^2$, and
\begin{align*}
    \frac{\beta}{\kappa} u_+^2 &= \frac{|\nabla\psi|^2}{\cos \psi} e^{i 2\psi} = \frac{e^{i \psi} }{\cos \psi} |\nabla \psi|^2 e^{i\psi} =  \frac{\cos \psi + i\sin \psi}{\cos \psi} |\nabla \psi|^2 e^{i\psi} = (1+i\tan \psi)|\nabla \psi|^2 e^{i\psi} 
\end{align*}
and
\begin{align*}
    \frac{\beta}{\kappa} u_-^2 &= \frac{|\nabla\psi|^2}{\cos \psi} e^{-i 2\psi} = \frac{e^{-i \psi} }{\cos \psi} |\nabla \psi|^2 e^{-i\psi} =  \frac{\cos \psi - i\sin \psi}{\cos \psi} |\nabla \psi|^2 e^{-i\psi} = (1-i\tan \psi)|\nabla \psi|^2 e^{-i\psi} .
\end{align*}
This shows $i\partial u_\pm /\partial z +\kappa \Delta u_\pm +\beta u_\pm^2 = 0$. Indeed, $|u_+| = |u_-| = |e^{\pm i\psi}| = 1$ so they have the same magnitudes, yet $u_+(0,\bx) = e^{i\psi(\bx)}$ and $u_-(0,\bx) = e^{-i\psi(\bx)}$ so their initial conditions are distinct in general. \qedhere	
\end{proof}

The following general construction is straightforward to verify.
\begin{proposition}\label{PROP:General Nonunique}
For any $\kappa(z,\bx)=\kappa(\bx)$ and $\beta(z, \bx)=\beta(\bx)$ such that $\dfrac{\beta}{\kappa}(\bx)\neq 0$, let $\varphi(\bx)\in \cC^2(\Omega; \bbC)$ be a complex-valued function that satisfies
\begin{equation}\label{EQ:Stationary Sol}
    \Delta \varphi+\frac{\beta}{\kappa}(\bx) \varphi^2=0,\qquad \mbox{in}\ \ \Omega\,.
\end{equation}
Then 
\[
    u_+(z, \bx)=\varphi(\bx) \quad \mbox{and}\quad u_-(z, \bx)=\overline{\varphi(\bx)} 
\]
are solutions to~\eqref{EQ:QNLS} with incident waves $u_+(0, \bx):=\varphi$ and $u_-(0, \bx):=\overline{\varphi}$ respectively, and $|u_+|=|u_-|$.
\end{proposition}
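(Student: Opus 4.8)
The plan is to verify directly that the $z$-independent functions $u_\pm$ satisfy the quadratic nonlinear Schr\"odinger equation~\eqref{EQ:QNLS}; no construction of solutions is needed, since the candidates are handed to us. The crucial first observation is that because $u_+(z,\bx)=\varphi(\bx)$ carries no $z$-dependence, its $z$-derivative vanishes identically, so that $i\partial_z u_+ + \kappa\Delta u_+ + \beta u_+^2 = \kappa\Delta\varphi + \beta\varphi^2 = \kappa\bigl(\Delta\varphi + \frac{\beta}{\kappa}\varphi^2\bigr)$. By hypothesis this is exactly $\kappa$ times the left-hand side of the stationary equation~\eqref{EQ:Stationary Sol}, hence zero. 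This immediately shows that $u_+$ solves~\eqref{EQ:QNLS} with initial datum $\varphi$.

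For $u_-=\overline{\varphi}$, the same $z$-independence reduces the verification to checking that $\kappa\Delta\overline{\varphi} + \beta\,\overline{\varphi}^{\,2} = 0$. Here I would simply take the complex conjugate of the stationary equation~\eqref{EQ:Stationary Sol}. Since the Technical Assumptions guarantee that $\kappa$ and $\beta$ are real-valued, conjugation commutes both with the (real) Laplacian and with multiplication by $\beta/\kappa$, yielding $\Delta\overline{\varphi} + \frac{\beta}{\kappa}\,\overline{\varphi}^{\,2} = 0$, which is precisely the identity required. Thus $u_-$ is a solution of~\eqref{EQ:QNLS} with initial datum $\overline{\varphi}$.

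It remains to record the two stated properties. Since $|u_+| = |\varphi| = |\overline{\varphi}| = |u_-|$ for every $(z,\bx)$, the two solutions share the same amplitude, so they produce identical intensity data at all transverse planes. On the other hand, the incident waves $\varphi$ and $\overline{\varphi}$ differ whenever $\varphi$ is not real-valued, which is the generic situation; this gap between equal amplitudes and distinct initial data is exactly the non-uniqueness the proposition asserts.

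I do not expect a genuine obstacle here, consistent with the author's remark that the construction is \emph{straightforward to verify}. The only point requiring care is the reality of the coefficients $\kappa$ and $\beta$: it is precisely this assumption that lets complex conjugation transport one solution $\varphi$ to a second solution $\overline{\varphi}$ of the very same equation while preserving amplitude. Everything else is a single-line substitution.
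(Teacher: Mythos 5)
Your proof is correct and is exactly the direct substitution the paper has in mind when it calls the construction ``straightforward to verify'' (the paper in fact omits the proof entirely): $z$-independence kills the $i\partial_z$ term, the stationary equation~\eqref{EQ:Stationary Sol} handles $u_+$, and conjugating~\eqref{EQ:Stationary Sol} --- legitimately, since $\kappa$ and $\beta$ are real-valued --- handles $u_-$. Your closing remark correctly identifies the reality of the coefficients as the one point where care is needed, so there is nothing to add.
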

The existence and nonuniqueness of solutions to~\eqref{EQ:Stationary Sol} are classical subjects of studies; see, for instance, ~\cite{AmPr-AMPA72, BaRe-CM11} and references therein.

\subsection{Non-uniqueness for the GP equation}

Similar nonuniqueness examples can be constructed for the GP model~\eqref{EQ:GP}. Recall that we can only hope to reconstruct the initial condition $f$ up to a constant phase shift. Therefore, we henceforth identify initial conditions that differ by a constant phase and consider the problem of reconstructing the equivalence class of $f$ under this equivalence relation.

\begin{proposition} \label{Prop:GP nonunique}
Let $\kappa(z, \bx)=1$ and $\beta(z, \bx) = \beta(z)$ (that is, $\beta$ is independent of the space variable). Then there exist two initial conditions $f_1$ and $f_2$ belonging to different equivalence classes such that the corresponding solutions $u_1$ and $u_2$ of the Gross-Pitaevskii equation \eqref{EQ:GP} satisfy $|u_1(z, \bx)| = |u_2(z, \bx)|$ for all $(z, \bx)\in (0, \infty)\times \Omega$.
\end{proposition}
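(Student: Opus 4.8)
The plan is to exhibit the non-uniqueness through an explicit family of plane-wave solutions, for which the amplitude turns out to be constant in both $z$ and $\bx$. Working on $\Omega=\bbR^d$ (or on a periodic box, in which case the frequencies below are restricted to the dual lattice), I would make the ansatz
\[
    u(z,\bx) = A(z)\,e^{i\xi\cdot\bx}, \qquad \xi\in\bbR^d,
\]
with $A:[0,\infty)\to\bbC$ to be determined. Since $\kappa\equiv 1$ and $\beta$ depends only on $z$, substituting this ansatz into the Gross--Pitaevskii equation~\eqref{EQ:GP} and cancelling the common factor $e^{i\xi\cdot\bx}$ reduces the PDE to the scalar ODE
\[
    i A'(z) = |\xi|^2 A(z) - \beta(z)\,|A(z)|^2 A(z),\qquad A(0)=A_0.
\]

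The key observation is that this ODE conserves the modulus of $A$. Indeed, using the ODE to write $A'=-i|\xi|^2 A + i\beta|A|^2 A$, one finds $\overline{A}A' = -i|\xi|^2|A|^2 + i\beta|A|^4$, which is purely imaginary; hence $\frac{d}{dz}|A|^2 = 2\,\mathrm{Re}(\overline{A}A') = 0$ and $|A(z)|\equiv |A_0|$. Consequently $|u(z,\bx)| = |A_0|$ is independent of both $z$ and $\bx$. Equivalently, one may simply verify directly that $A(z) = A_0\exp\!\big(i\int_0^z(-|\xi|^2 + \beta(s)|A_0|^2)\,ds\big)$ solves the ODE and has constant modulus, so that the plane wave $u(z,\bx)=A(z)e^{i\xi\cdot\bx}$ is a genuine solution of~\eqref{EQ:GP}.

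With this in hand, I would finish by choosing two distinct frequencies. Fix $A_0\neq 0$ and pick $\xi_1\neq\xi_2$ in $\bbR^d$ (for instance $\xi_1=0$ and any $\xi_2\neq 0$), and set $f_1(\bx)=A_0\,e^{i\xi_1\cdot\bx}$ and $f_2(\bx)=A_0\,e^{i\xi_2\cdot\bx}$. The corresponding plane-wave solutions $u_1,u_2$ then satisfy $|u_1(z,\bx)|=|u_2(z,\bx)|=|A_0|$ for all $(z,\bx)\in(0,\infty)\times\Omega$. Moreover $f_1/f_2 = e^{i(\xi_1-\xi_2)\cdot\bx}$ is non-constant, so $f_1$ and $f_2$ cannot differ by a global constant phase and therefore lie in different equivalence classes.

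The computations here are entirely routine, so the only real content is the choice of ansatz: the point worth isolating is that the plane-wave amplitude is conserved by the gauge-type structure of the cubic nonlinearity, which forces $|u|$ to be constant in space-time and hence trivially matched across two solutions with different carrier frequencies. I do not anticipate a genuine obstacle; the only mild subtlety is confirming that the two data lie in distinct equivalence classes rather than differing by a constant phase, which is immediate once $\xi_1\neq\xi_2$.
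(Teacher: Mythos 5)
Your proof is correct and takes essentially the same approach as the paper: both exhibit plane-wave solutions $\exp\bigl(i\bigl[\bk\cdot\bx - |\bk|^2 z + \int_0^z \beta(s)\,ds\bigr]\bigr)$ for two distinct admissible frequencies, whose moduli are identically constant while the initial data differ by a non-constant phase factor and hence lie in different equivalence classes. The only cosmetic difference is that you derive the solution via an ODE reduction with a general amplitude $A_0$ and a conservation argument, whereas the paper simply writes down the unit-amplitude solution and verifies it directly.
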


\begin{proof} Let $\bk_1 = ((\bk_1)_x, (\bk_1)_y)\in \bbR^2$ and $\bk_2 = ((\bk_2)_x, (\bk_2)_y)\in \bbR^2$ be two distinct vectors such that $(\bk_i)_x \in \frac{2\pi}{L_x}\bbZ$ and $(\bk_i)_y \in \frac{2\pi}{L_y}\bbZ$ for $i=1,2$. Then 
\begin{align*}
    u_1(z, \bx) &:= \exp\left(i\left[\bk_1\cdot \bx - |\bk_1|^2 z + \int_0^z \beta(s)\,ds \right]\right),\\
    u_2(z, \bx) &:= \exp\left(i\left[\bk_2\cdot \bx - |\bk_2|^2 z + \int_0^z \beta(s)\,ds \right]\right)
\end{align*}
are solutions to the GP equation \eqref{EQ:GP} with periodic boundary conditions, as is readily verified. These solutions have the property that 
\begin{align*}
    |u_1(z, \bx)| = |u_2(z, \bx)| = 1,\qquad \forall\,(z, \bx)\in (0,\infty)\times \Omega,
\end{align*}
yet their initial conditions 
\begin{equation*}
    f_1(\bx) := u_1(0,\bx) = e^{i\bk_1\cdot\bx},\qquad f_2(\bx) := u_2(0,\bx) = e^{i\bk_2\cdot\bx}
\end{equation*}
belong to different equivalence classes.
\end{proof}
 
To focus on the main idea, we set $\Omega=\bbR^2$ in order not to have to deal with boundary conditions.  Let us define a subset $\cB' \subset \cB$ as follows. We say that $\beta=\beta(\bx)\in \cB$ is in $\cB'$ if: there exists a constant $c$ and a non-constant harmonic function $\psi\in \cC^2(\overline{\Omega}; \bbR)$ such that 
 \begin{align}\label{EQ:B'}
     \beta(\bx) = c+|\nabla \psi(\bx)|^2,\qquad \forall\,\bx\in\Omega.
 \end{align}
 For instance, all constant functions belong to the class $\cB'$, as can be seen by choosing $\psi$ of the form $\psi(\bx) = \bk\cdot \bx$ in the definition. Another example of a function belonging to $\cB'$ is the function $\beta(x,y) = x^2+y^2$, as can be seen by choosing $c=0$ and the harmonic function $\psi(x,y) = (x^2-y^2)/2$. 

\begin{proposition} Suppose $\beta=\beta(\bx)$ lies in $\cB'$. Then there exist two initial conditions $f_1$ and $f_2$ belonging to different equivalence classes such that the corresponding solutions $u_1$ and $u_2$ of the Gross-Pitaevskii equation \eqref{EQ:GP} satisfy $|u_1(z, \bx)| = |u_2(z, \bx)|$ for all $(z, \bx)\in (0, \infty) \times \Omega$.
\end{proposition}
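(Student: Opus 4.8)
The plan is to follow the explicit-solution strategy of the preceding propositions, now taking advantage of the harmonicity of $\psi$ to absorb the space-dependent amplitude term in the Gross-Pitaevskii nonlinearity. Writing $\beta(\bx) = c + |\nabla\psi(\bx)|^2$ with $\psi$ harmonic and $\kappa = 1$, I would try the unimodular ansatz
\[
    u_\pm(z,\bx) := \exp\!\big(i[\pm\psi(\bx) + cz]\big).
\]
Because $|u_\pm|\equiv 1$, the two solutions share the same amplitude automatically, and the nonlinear term simplifies to $\beta|u_\pm|^2 u_\pm = \beta u_\pm$.

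The verification is a direct computation. I would first record $\partial_z u_\pm = ic\, u_\pm$. Since $\psi$ is harmonic, $\Delta(\pm\psi)=0$, so
\[
    \Delta u_\pm = \big(i\Delta(\pm\psi) - |\nabla(\pm\psi)|^2\big)u_\pm = -|\nabla\psi|^2 u_\pm.
\]
Substituting into $i\partial_z u + \Delta u + \beta u = 0$ yields
\[
    i(ic)u_\pm - |\nabla\psi|^2 u_\pm + (c+|\nabla\psi|^2)u_\pm = \big(-c - |\nabla\psi|^2 + c + |\nabla\psi|^2\big)u_\pm = 0,
\]
so $u_\pm$ solve~\eqref{EQ:GP}. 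The essential cancellation is that harmonicity eliminates the imaginary $\Delta\psi$ term, while the $|\nabla\psi|^2$ produced by the Laplacian is exactly matched by the $|\nabla\psi|^2$ built into $\beta$.

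It then remains to check that the initial conditions $f_1(\bx) = e^{i\psi(\bx)}$ and $f_2(\bx) = e^{-i\psi(\bx)}$ lie in different equivalence classes, i.e.\ do not differ by a single constant phase. This is the only step requiring care. If $f_1 = f_2\, e^{i\varphi}$ for some constant $\varphi$, then $e^{2i\psi(\bx)} = e^{i\varphi}$, so the continuous function $2\psi(\bx) - \varphi$ takes values in the discrete set $2\pi\bbZ$; since $\Omega = \bbR^2$ is connected, it must be constant, forcing $\psi$ to be constant. This contradicts the assumption in the definition of $\cB'$ that $\psi$ is non-constant, so $f_1$ and $f_2$ indeed belong to distinct classes.

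I do not anticipate a genuine obstacle here: the construction produces a closed-form pair of solutions, and the only nontrivial point is the topological argument above that nonconstancy of $\psi$ obstructs constant-phase equivalence. As a sanity check on non-vacuousness, the examples cited after~\eqref{EQ:B'} --- the linear $\psi(\bx)=\bk\cdot\bx$ (which recovers the plane-wave construction of the previous proposition) and $\psi(x,y)=(x^2-y^2)/2$ giving $\beta(x,y)=x^2+y^2$ --- are non-constant harmonic functions, confirming the hypothesis is realizable.
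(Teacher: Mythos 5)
Your proof is correct and follows essentially the same route as the paper: the paper uses the identical ansatz $u_\pm(z,\bx) = e^{i(\pm\psi(\bx)+cz)}$ and simply asserts that it solves~\eqref{EQ:GP} and that the initial data $e^{\pm i\psi}$ lie in different equivalence classes. Your write-up merely fills in the two details the paper leaves to the reader --- the cancellation $i\partial_z u_\pm + \Delta u_\pm + \beta u_\pm = (-c - |\nabla\psi|^2 + c + |\nabla\psi|^2)u_\pm = 0$ enabled by harmonicity, and the connectedness argument showing that non-constant $\psi$ rules out a constant phase difference --- both of which are correct.
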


\begin{proof} Let $c$ be $\psi$ be as in \eqref{EQ:B'}. One readily checks through a straightforward calculation that 
\begin{align*}
    u_+(z, \bx) &:= e^{i(\psi(\bx) + cz)},\\
    u_-(z, \bx) &:= e^{i(-\psi(\bx) + cz)}
\end{align*}
are solutions to~\eqref{EQ:GP}. These solutions have the property that 
\begin{align*}
    |u_+(z, \bx)| = |u_-(z, \bx)| = 1,\qquad \forall\,(z, \bx)\in (0, \infty)\times \Omega,
\end{align*}
yet their initial conditions 
\begin{equation*}
    f_+(\bx) := u_+(0, \bx) = e^{i\psi(\bx)},\qquad f_-(\bx) := u_-(0, \bx) = e^{-i\psi(\bx)}
\end{equation*}
belong to different equivalence classes.
\end{proof}

A similar construction as in~\Cref{PROP:General Nonunique} also works here.
\begin{proposition}
For any real-valued $\kappa(z,\bx)=\kappa(\bx)$ and $\beta(z, \bx)=\beta(\bx)$ such that $\dfrac{\beta(\bx)}{\kappa(\bx)}\neq 0$, let $\varphi(\bx)\in \cC^2(\Omega; \bbR)$ be a real-valued function that solves
\[
    \Delta \varphi+\frac{2\beta}{\kappa} |\varphi|^2\varphi=0,\qquad \mbox{in}\ \ \Omega\,.
\]
We define
\[ v(\bx)=\varphi(\bx)+i\varphi(\bx)\,.
\]
Then 
\[
    u_+(z, \bx)=v(\bx) \quad \mbox{and}\quad u_-(z, \bx)=\overline{v(\bx)}
\]
are solutions to~\eqref{EQ:GP} with incident waves $v$ and $\overline{v}$ respectively, and $|u_+|=|u_-|$.
\end{proposition}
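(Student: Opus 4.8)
The plan is to verify directly that the two candidate functions solve~\eqref{EQ:GP}, exploiting that both are independent of the propagation variable $z$. First I would write $v=\varphi+i\varphi=(1+i)\varphi$ with $\varphi$ real, so that $u_+=v=(1+i)\varphi$ and $u_-=\overline v=(1-i)\varphi$ are both stationary in $z$. Consequently $i\,\partial u_\pm/\partial z=0$, and the full Gross--Pitaevskii equation~\eqref{EQ:GP} collapses to the purely spatial requirement $\kappa\Delta u_\pm+\beta|u_\pm|^2u_\pm=0$ in $\Omega$. This is the only thing that remains to be checked.

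Next I would carry out the two algebraic reductions that make the complex prefactors drop out. Since $\varphi$ is real, $\Delta u_\pm=(1\pm i)\Delta\varphi$, while $|u_\pm|^2=|1\pm i|^2|\varphi|^2=2|\varphi|^2$ for both signs, whence $|u_\pm|^2u_\pm=2(1\pm i)|\varphi|^2\varphi$. Substituting these into the stationary expression and factoring out the common constant $(1\pm i)$ gives
\[
    \kappa\Delta u_\pm+\beta\,|u_\pm|^2u_\pm=(1\pm i)\big(\kappa\Delta\varphi+2\beta\,|\varphi|^2\varphi\big).
\]
The bracketed factor is exactly $\kappa$ times the hypothesized equation $\Delta\varphi+\tfrac{2\beta}{\kappa}|\varphi|^2\varphi=0$, so it vanishes identically; this is precisely why the coefficient $2$ (matching $|1\pm i|^2=2$) is built into the governing equation for $\varphi$. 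Hence both $u_+$ and $u_-$ solve~\eqref{EQ:GP}, and because they are $z$-independent their incident waves are $u_+(0,\bx)=v$ and $u_-(0,\bx)=\overline v$, as claimed. The equal-amplitude statement is then immediate: $|u_+|=|v|=\sqrt2\,|\varphi|=|\overline v|=|u_-|$ throughout $(0,\infty)\times\Omega$.

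There is essentially no analytic obstacle here; the statement is a direct verification, and the only point requiring care is the bookkeeping of the constant $(1\pm i)$ and the recognition that its squared modulus reproduces the factor $2$ in the equation for $\varphi$. I would close by noting that this is the Gross--Pitaevskii analogue of~\Cref{PROP:General Nonunique}: the mechanism producing two equal-amplitude solutions is the conjugation symmetry of the stationary equation under real coefficients $\kappa,\beta$. Indeed, once $u_+$ is shown to solve the stationary equation, conjugating that equation and using $|\overline{u_+}|^2=|u_+|^2$ shows that $u_-=\overline{u_+}$ solves it as well, so the verification for $u_-$ can alternatively be obtained for free from that of $u_+$.
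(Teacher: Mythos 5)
Your proof is correct and is exactly the direct verification the paper intends: the paper omits an explicit proof of this proposition, presenting it as the analogue of the ``straightforward to verify'' construction in~\Cref{PROP:General Nonunique}, and your plug-in check (with the factor $(1\pm i)$ and the matching coefficient $2$, plus the reduction to the stationary equation since $u_\pm$ are $z$-independent) is precisely that verification. Your closing observation that $u_-$ follows from $u_+$ by conjugating the stationary equation, valid because $\kappa$ and $\beta$ are real, is a correct and economical shortcut.
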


\section{Details on numerical implementations}
\label{SEC:Implementation}

We outline here the main computational procedures we adopted for the numerical experiments described in \Cref{SEC:Generalizations}. The main task is to solve the PDE-constrained optimization problem
\begin{equation}\label{EQ:Min}
    \min_{f} \Phi(f)
\end{equation}
subject to, $1\le m\le N_m$,
\begin{equation}\label{EQ:Min Constraints}
    \begin{array}{rcll}
    i\dfrac{\partial u_m}{\partial z} + \kappa_m \Delta u_m + \beta_m u_m^2 &=&0, & \mbox{in}\ \ (0, L_z)\times (0, L_x)\times(0, L_y)\\
    u_m(0, x,y) &=& f(x,y), & \text{in} \ \ (0, L_x)\times(0, L_y)\\
    u_m(z, 0,y) &=& u_m(z, L_x,y), &\text{on}\ \ (0, L_z)\times (0, L_y)\\
    u_m(z, x,0) &=& u_m(z, x, L_y),& \text{on}\ \ (0, L_z)\times (0, L_x).
    \end{array}
\end{equation}
where $\Phi(f)$ is the functional defined in~\eqref{EQ:Min Func}. 

The computational procedure we outlined here can handle, in exactly the same manner, the case when the quadratic nonlinear equation in~\eqref{EQ:Min Constraints} is replaced with the linear Schr\"{o}dinger model~\eqref{EQ:LS} or the Gross-Pitaevskii model~\eqref{EQ:GP}.

\subsection{Adjoint state gradient calculation}
\label{SUBSEC:Adjoint}

To compute the gradient of the objective function $\Phi(f)$ with respect to the incident wave $f$, we use the standard adjoint state method. A rigorous proof of the differentiability of $\Phi$ with respect to $f$ can be done in appropriate function spaces (such as those set up in~\cite{Cazenave-Book03,LaLuZh-arXiv23,LaOkSaSaTe-arXiv24}), but it is out of the scope of this work. Instead, we outline here only the calculation process.

We first differentiate $\Phi$ with respect to $f$ to have that
\begin{equation}\label{EQ:Frechet}
    \Phi'(f)[\delta f] = 2 \sum_{m=1}^{N_m} \sum_{p=1}^{N_p} \int_{\Omega} (|u_{m,p}|^2-d_{m,p}^2) \, \Re\left( \overline{u_{m,p}} \, u_{m,p}'(f)[\delta f]\right) \, d\bx\,.
\end{equation}
where $u_m'(f)[\delta f]$ denote the derivative of $u_m$ at $f$ in the direction $\delta f$. Meanwhile, we differentiate the solution $u_m$ to~\eqref{EQ:Min Constraints} with respect to $f$ to have
\begin{equation}\label{EQ:Deri of u}
    \begin{array}{rcll}
    i\dfrac{\partial u_m'}{\partial z} + \kappa_m \Delta u_m' + 2\beta_m u_m u_m' &=&0, & \mbox{in}\ \ (0, L_z)\times (0, L_x)\times(0, L_y)\\
    u_m'(0, x,y) &=& \delta f, & \text{in} \ \ (0, L_x)\times(0, L_y)\\
    u_m'(z, 0,y) &=& u_m'(z, L_x,y), &\text{on}\ \ (0, L_z)\times (0, L_y)\\
    u_m'(z, x,0) &=& u_m'(z, x, L_y),& \text{on}\ \ (0, L_z)\times (0, L_x)\,.
    \end{array}
\end{equation}

For $1\le m\le N_m$, let $u_m$ be the solution to \eqref{EQ:Min Constraints}, and $w_m$ be the solution to the adjoint problem
\begin{equation}\label{EQ:QNLS Adjoint}
    \begin{array}{rcll}
    -i\dfrac{\partial w_m}{\partial z} + \kappa_m \Delta w_m + 2\beta_m u_mw_m &=&Q_m(z, x, y), & \mbox{in}\ \ (0, L_z)\times (0, L_x)\times(0, L_y)\\
    w_m(L_z, x,y) &=& 0, & \text{in} \ \ (0, L_x)\times(0, L_y)\\
    w_m(z, x,y) &=& 0, &\text{on}\ \ (0, L_z)\times \partial \Omega\\
    \bn\cdot \nabla w_m(z, x,y) &=& 0,& \text{on}\ \ (0, L_z)\times\partial\Omega
    \end{array}
\end{equation}
with
\[
Q_m(z, x, y)=2\dsum_{p=1}^{N_p} (|u_{m,p}|^2-d_{m,p}^2)\overline{u_m}\delta(z-z_p)\,.
\]
Multiplying~\eqref{EQ:Deri of u} by $w_m$, multiplying~\eqref{EQ:QNLS Adjoint} by $u_m'$, integrating over $(0,L_z)\times(0, L_x)\times (0, L_y)$, and taking the difference of the results then gives us
\[
\sum_{p=1}^{N_p} \int_{\Omega} (|u_{m,p}|^2-d_{m,p}^2) \, \left( \overline{u_{m,p}} \, u_{m,p}'(f)[\delta f]\right) \, d\bx
=i\int_{\Omega} w_m(0,\bx) \delta f(\bx) \,d\bx.    
\]
We have therefore shown that the Fr\'echet derivative of $\Phi(f)$ at $f$ in the direction $\delta f$ is given by
\begin{equation}\label{EQ:Gradient QNLS}
    \Phi'(f)[\delta f] = -\sum_{m=1}^{N_m} \int_{\Omega} \Big[\Im( w_m(0, \bx)) \Re (\delta f)+ \Re( w_m(0, \bx)) \Im (\delta f)\Big]\,d\bx.
\end{equation}
\RED{Additionally, recall from~\eqref{EQ:Psi} that when the amplitude of $f$ is given (as say $a$), the problem reduces to minimizing $\Psi(\varphi) = \Phi(a\exp(i\varphi))$. In this case, the Fr\'echet derivative of $\Psi(\varphi)$ at $\varphi$ in the direction $\delta\varphi$ can be computed from the Fr\'echet derivative~\eqref{EQ:Gradient QNLS} of $\Phi$ and the chain rule:
\begin{align*}
    \Psi'(\varphi)[\delta\varphi] &= \Phi'(a\exp(i\varphi))[ia\exp(i\varphi)\delta\varphi]\\
    &= -\sum_{m=1}^{N_m} \int_{\Omega} \Big[\Im( w_m(0, \bx)) \Re (ia\exp(i\varphi)\delta\varphi)+ \Re( w_m(0, \bx)) \Im (ia\exp(i\varphi)\delta\varphi)\Big]\,d\bx\\
    &= -\sum_{m=1}^{N_m} \int_{\Omega} a\Big[-\Im( w_m(0, \bx))\sin\varphi + \Re( w_m(0, \bx))\cos\varphi\Big]\delta\varphi\,d\bx.
\end{align*}
}
The same calculations can be performed to solve the phase retrieval problem for the Gross-Pitaevskii equation. Indeed, we replace the quadratic nonlinear Schr\"{o}dinger model in the optimization problem~\eqref{EQ:Min} with the Gross-Pitaevski equation with periodic boundary conditions:
\begin{equation}\label{EQ:Min Constraints GP}
    \begin{array}{rcll}
    i\dfrac{\partial u_m}{\partial z} + \kappa_m \Delta u_m + \beta_m |u_m|^2u_m &=&0, & \mbox{in}\ \ (0, L_z)\times (0, L_x)\times(0, L_y)\\
    u_m(x,y,0) &=& f(x,y), & \text{in} \ \ (0, L_x)\times(0, L_y)\\
    u_m(z, 0,y) &=& u_m(z, L_x,y), &\text{on}\ \ (0, L_z)\times (0, L_y)\\
    u_m(z, x,0) &=& u_m(z, x, L_y),& \text{on}\ \ (0, L_z)\times (0, L_x).
    \end{array}
\end{equation}
for $1\le m\le N_m$, and introduce $w_m$ as the solution to the adjoint problem
\begin{equation}\label{EQ:GP Adjoint}
    \begin{array}{rcll}
    -i\dfrac{\partial w_m}{\partial z} + \kappa_m \Delta w_m + 2\beta_m |u_m|^2 w_m +\beta_m\overline{u_m}^2\overline w_m &=&Q_m(z, x, y), & \mbox{in}\ \ (0, L_z)\times (0, L_x)\times(0, L_y)\\
    w_m(L_z, x,y) &=& 0, & \text{on} \ \ (0, L_x)\times(0, L_y)\\
    w_m(z, x,y) &=& 0, &\text{on}\ \ (0, L_z)\times \partial \Omega\\
    \bn\cdot \nabla w_m(z, x,y) &=& 0,& \text{on}\ \ (0, L_z)\times\partial\Omega
    \end{array}
\end{equation}
with $Q_m$ defined in the same way as above. Then the Fr\'echet derivative of $\Phi(f)$ at $f$ in the direction $\delta f$ is again given as in~\eqref{EQ:Gradient QNLS}.

\Cref{ALGO:Grad} summarizes the procedure of gradient computation for the objective functions.
\begin{algorithm}[!htb]
\caption{Gradient Computation by Adjoint State Method}
\label{ALGO:Grad}
\begin{algorithmic}[1]
  \State Initialize $\Phi = 0$, $\bG = 0+0i$
  
  \For{$m=1,\cdots,N_m$}
      \State Solve ~\eqref{EQ:Min 
      Constraints} (resp. ~\eqref{EQ:Min Constraints GP} for GP model) for $u_m$ 
      \State Evaluate the residual and adjoint source
      \[
  r_{m,p}:= |u_m(z_p, \bx)|^2-d_{m,p}^2( \bx),\qquad 
  Q_{m,p}= 2r_{m,p} \, \overline{u_{s}(z_p, \bx)}
  \]
       \State Update
  $$\Phi = \Phi + \frac{1}{2} \sum_{m=1}^{N_p} \int_{\Omega} r_{m,p}^2(\bx)\, d\bx$$
    \State Solve the adjoint equation~\eqref{EQ:QNLS Adjoint} (resp.~\eqref{EQ:GP Adjoint} for GP model) for $w_m$
  \State Update
      $$\Re (\bG) = \Re (\bG)  -   \Im(w_m(0, \bx))$$
      $$\Im (\bG) = \Im (\bG)  -   \Re(w_m(0, \bx))$$
  \EndFor
  \State \Return $\Phi$, $\bG$
\end{algorithmic}
\end{algorithm}

\subsection{Domain discretization}

Our simulation covers the domain $(0,L_z)\times \Omega$ with $\Omega=(0, L_x)\times (0, L_y)$. We discretize $\Omega$ by $100\times 100$ cells of uniform size. The spatial grid sizes are $dx=dy=0.01$, and for numerical stability and resolution considerations, we choose $dz=1\times e^{-5}$. The total simulation time is fixed at $L_z=0.02$.  This time frame allows the waves to travel properly within the domain without using too many computational resources.

\subsection{Finite difference discretizations of PDEs}

The forward models~\eqref{EQ:Min Constraints} and~\eqref{EQ:Min Constraints GP}, as well as the corresponding adjoint equations~\eqref{EQ:QNLS Adjoint} and~\eqref{EQ:GP Adjoint} are all discretized using a standard time-domain finite difference method. In particular, we use the fourth-order Runge-Kutta method for the time variable. 

\subsection{Numerical optimization method}

To solve the minimization problem, we implemented a quasi-Newton method with the L-BFGS rule to update the approximated Hessian~\cite{NoWr-Book06}. Gradients of the objective function are computed with the adjoint statement method we outlined in~\Cref{SUBSEC:Adjoint}. For our Python-based implementation, we utilized the \texttt{scipy.optimize} toolkit for optimization routines. We enforce stopping criteria based on the magnitude of the objective function $\|\Phi\|$, the norm of the gradient $\|\nabla \Phi\|$, a pre-defined maximum iteration count, and the maximum number of line searches in each iteration.


\end{document}